\theoremstyle{plain}
\newtheorem{thm}{Theorem}
\newtheorem*{theorem*}{Theorem}
\newtheorem{lemma}{Lemma}
\newtheorem{corollary}{Corollary}
\newtheorem*{corollary*}{Corollary}
\theoremstyle{definition}
\newcommand{\bydef}{\stackrel{\bigtriangleup}{=}}
\newcommand{\corr}{\color{black}}
\begin{document}
\title{Epidemic Spreading with External Agents \thanks{An earlier version of this work appeared in the proceedings of IEEE Infocom, Shanghai, China, April 2011 \cite{gopban11}.}}

\author{Siddhartha~Banerjee,~\IEEEmembership{Member,~IEEE},
        Aditya~Gopalan,
        Abhik~Kumar~Das and
        Sanjay~Shakkottai~\IEEEmembership{Fellow,~IEEE}% <-this % stops a space
\thanks{S. Banerjee is with the Department of Management Science and Engineering, Stanford University, Stanford, CA-94305, USA (e-mail: sidb@stanford.edu).}% 
\thanks{A. Gopalan is with the Department of Electrical Engineering, Technion- Israel Institute of Technology, Haifa, Israel (e-mail: aditya@ee.technion.ac.il).}% 
\thanks{A. K. Das is with Samsung Mobile Solutions Lab, San Diego, CA-92122, USA (e-mail: akdas@utexas.edu).}%
\thanks{S. Shakkottai is with the Department of Electrical and Computer Engineering, The University of Texas at Austin, Austin, TX-78705, USA (e-mail: shakkott@austin.utexas.edu).}% 
%\thanks{Manuscript received April 19, 2005; revised January 11, 2007.}
}

% The paper headers
%\markboth{Journal of \LaTeX\ Class Files,~Vol.~6, No.~1, January~2007}%
%{Shell \MakeLowercase{\textit{et al.}}: Bare Demo of IEEEtran.cls for Journals}

\maketitle

\begin{abstract}
We study epidemic spreading processes in large networks, when the spread is assisted by a small number of \emph{external agents}: infection sources with bounded spreading power, but whose movement is unrestricted vis-{\`a}-vis the underlying network topology. For networks which are \emph{`spatially constrained'}, we show that the spread of infection can be significantly speeded up even by a few such external agents infecting \emph{randomly}. Moreover, for general networks, we derive upper-bounds on the order of the spreading time achieved by certain simple (random/greedy) external-spreading policies. Conversely, for certain common classes of networks such as line graphs, grids and random geometric graphs, we also derive lower bounds on the order of the spreading time over all (potentially network-state aware and adversarial) external-spreading policies; these adversarial lower bounds match (up to logarithmic factors) the spreading time achieved by an external agent with a random spreading policy. This demonstrates that random, state-oblivious infection-spreading by an external agent is in fact order-wise optimal for spreading in such spatially constrained networks.
\end{abstract}

\begin{IEEEkeywords}
Epidemic spreading, infection/information dissemination, long-range spreading, percolation, mobility.
\end{IEEEkeywords}

% For peer review papers, you can put extra information on the cover
% page as needed:
%\ifCLASSOPTIONpeerreview
%\begin{center} \bfseries EDICS Category: 3-BBND \end{center}
%\fi
%
% For peerreview papers, this IEEEtran command inserts a page break and
% creates the second title. It will be ignored for other modes.
\IEEEpeerreviewmaketitle

\section{Introduction}
Various natural and engineered phenomena involve spreading in networks. Rumors/news propagate among people linked by various means of communication; diseases diffuse as epidemics through populations by various modes; plants disperse pollen/seeds, and thus genetic traits, geographically; riots spread across communities; advertisers aim to disseminate information about products through consumer networks; computer viruses and worms, and also software patches, piggyback across computer networks. Understanding how infection/information/innovation can travel across networks through such processes has been a subject of extensive study in disciplines ranging from epidemiology
\cite{ball04:epidemics,andersonmay92:diseasesbook},
sociology \cite{Rogers2003Diffusion,granovetter1973:weakties} and computer science
\cite{kepwhite91:viruses,kempekleintar03:influence}
to physics \cite{satves02:scalefree}, information theory/networking \cite{massganesh05:epidemics,pittel87:rumor,sanghajek07:gossip,grosstse02:mobility,sardim09:gossipmob}
and applied mathematics
\cite{durliu88:contact,kesten93:speed,DraiefMass}. Though many different models have been considered for such processes, they all involve \emph{epidemic dynamics}: propagation via peer-to-peer interactions between the network nodes. In this paper, we consider one-way dissemination or spreading via such epidemic dynamics -- we refer to this as \emph{epidemic spreading}. Our focus however is on understanding the effect of external agents on such epidemic spreading processes.

In many real-world networks, spreading occurs via the interaction of two processes -- $(i)$ a local epidemic spreading process in the network, and $(ii)$ a global infection process due to agents that are external to the network. For instance, in wireless communication, viruses and worms exploit links due to both short-range technologies like Bluetooth and long-range media such as SMS/MMS and the Internet \cite{wang09:spreading,micnob05:mobepidemic}. To paraphrase Kleinberg \cite{kleinberg07:wlessepi}, outbreaks due to such worms are well-modeled by local spreading on a fixed network of nodes in space (\emph{i.e.} short-range Bluetooth transmissions between users) aided by paths through the network (\emph{i.e.} long-range emails and messages through SMS/MMS/Internet). Other examples of multi-scale spreading include human epidemics \cite{ball04:epidemics} and bio-terror attacks \cite{kaplan03:bioterror}, where infection spreads locally through interpersonal interactions, but is aided by long-range human travel, \emph{e.g.}, via airline routes \cite{Colizza+2006}. In all these cases, some form of agency external to the network is responsible for long-range spread; we want to study the effect of this external agency.

To this end, we consider a model for spreading in networks that decomposes into two distinct components -- a basic \emph{intrinsic spread} component in which infection spreads \emph{locally} via epidemic-dynamics on the underlying graph topology, and an additional \emph{external spread} component in which `external agents' (potentially unconstrained by the underlying graph) can carry infection far from its origin, helping it spread \emph{globally}. More specifically, we develop a rigorous framework with which we quantify the effect that a number of \emph{omniscient} (\emph{i.e.}  network-state aware) and \emph{adversarial} (\emph{i.e.} attempting to maximize the rate of infection) external infection agents can have on the spreading time. We stress that the generic terms `intrinsic spread' and `external spread' serve to model a variety of situations involving heterogeneous modes of spreading -- we discuss this in more detail in Section \ref{sec:model}.

Characterizing the impact of external agency on epidemic spread has a twofold utility:
\begin{enumerate}
\item \emph{(Adversarial perspective)} When malicious epidemics threaten to spread via both intrinsic and external means, it becomes important to understand the \emph{worst-case} spreading behavior brought about by external agents, in order to deploy appropriate countermeasures.
\item \emph{(Optimization perspective)} In cases where dissemination is desirable and the external component can be \emph{controlled} -- \emph{e.g.}, in viral advertising \cite{kempekleintar03:influence}, network protocol design \cite{kempekleindem04:spatgossip}, diffusion of innovations \cite{Rogers2003Diffusion}, etc. -- a study of external-agent assisted spreading can help design faster spreading strategies.
\end{enumerate}

\subsection{Main Contributions}
We consider a graph $G = (V,E)$ in which a spreading process starts at a designated node and commences spreading through two interacting dynamics: an \emph{intrinsic epidemic spread}, and an additional \emph{external infection}. We assume all processes evolve in continuous time, and inter-event times are drawn from independent \emph{exponential} distributions, with appropriate rates\footnote{This is in accordance with assumptions in literature \cite{Durrett07}; however, the results easily extend to a discrete time system, with events in each time slot occurring according to independent Bernoulli random variables.}. The metric of interest is the \emph{spreading time} -- the time taken by the process to spread to all nodes.

We assume the intrinsic spread follows the Susceptible-Infected (SI) dynamics \cite{Brauer12,Durrett07} (alternately referred to in literature as \emph{first-passage percolation} \cite{Kesten03}); network nodes start of as being `susceptible' (S), and transition to being `infected' (I) at a rate proportional to the number of infected neighbors. Once infected, a node remains so forever -- this distinguishes one-way spreading processes considered in this work from related epidemic processes such as the SIS/contact-process \cite{massganesh05:epidemics} or the SIR/Reed-Frost epidemics \cite{kempekleintar03:influence,DraiefMass}, where infected nodes can recover. A formal description is provided in Section \ref{sec:model}.

To model the external infection process, we allow \emph{every} node in the graph to get infected at a potentially different (including zero) rate at each instant; thus at time $t$, the state of the network consists of a set of nodes which are infected (therefore determining the intrinsic spreading process) and a $|V|$-dimensional vector $\bar{L}(t)$ of external infection-rates for each node. The spreading power, or \emph{virulence}, of the external agents is measured by $L(t)\triangleq||\bar{L}(t)||_1$, i.e., the sum rate of external infection. Subject to restrictions on the virulence $L(t)$, we allow $\bar{L}(t)$ to be chosen as a function of the network state and history (\emph{omniscience}) and further, chosen \emph{adversarially}, i.e., designed to minimize the spreading time. In Section \ref{sec:model} we discuss how this model generalizes various models for spreading via external sources. 

Our main message is somewhat surprising -- in the above setting, spite of the `adversarial power' external agents have for spreading infection, we show that for common \emph{spatially-constrained} graphs (i.e., having high diameter/low conductance), a simple random strategy is order-optimal (i.e., up to logarithmic factors). More formally, our contributions in this paper are as follows:
\begin{enumerate}
\item We give \emph{upper bounds} on the spreading time (expectation and concentration results) for general graphs when the external infection pattern is \emph{random}, i.e., when every node is susceptible to the same external-infection rate, irrespective of the infection-state and graph topology. The bounds are based on the graph topology (in particular, diameter/conductance of appropriate subgraphs) and a lower bound $L_{\min}(n)$ on the virulence (which we allow to scale with the network size). We also analyze an alternate \emph{greedy} infection policy based on the same graph partitioning scheme, for which we obtain better bounds for the spreading time.
\item For common classes of structured graphs (ring/line graphs, $d$-dimensional grids) and random graphs (geometric random graphs) which have high diameter/low conductance (spatially-constrained), we use first-passage percolation theory \cite{kesten93:speed} to derive \emph{lower bounds} on the order of spreading times (again, both in expectation and w.h.p) over \emph{all} (possibly omniscient and adversarial) external-infection policies. These lower bounds are in terms of the graph topology and an upper bound $L_{\max}(n)$ on the virulence, and match the upper bounds for random spreading up to logarithmic factors, showing that \emph{random external-infection policies are order-wise optimal} for such spatially-constrained graphs. Furthermore, they match exactly for the greedy policies, indicating that these bounds are tight. 
\item Apart from these results, the general bounds (and related techniques) we derive are of independent interest. They provide a fairly complete picture of the dependence of spreading time on external virulence and graph topology in a wide regime; in particular, it is tight for graphs with \emph{polynomially-bounded diameter} (i.e., diameter $D(n)=\Omega(n^{\alpha})$ for some $\alpha>0$) and {sub-linear external virulence} (i.e.,$||L(t)||_1=o(n)$). To demonstrate this, we discuss how other external-infection models (graphs with additional static or dynamic edges, mobile agents) can be analyzed in our framework, and what our bounds translate to in such cases.
\end{enumerate}

\subsection{Related Work}
There is a large body of prior work  -- both analysis as well as design-oriented studies -- which looks at spreading and epidemic processes on networks.

Several authors have studied the behavior and effects of intrinsic epidemic processes on networks, both numerically using data/simulations \cite{satves02:scalefree,Rogers2003Diffusion,granovetter1973:weakties,kepwhite91:viruses}
and analytically \cite{massganesh05:epidemics,durliu88:contact,dshah:gossipbook,DraiefMass}. More relevant to our work on the effect of external agents, several numerical studies have investigated the spread of infectious diseases via specific mobility patterns, \emph{e.g.} under airline networks \cite{Colizza+2006}, heterogeneous geographic means \cite{ball04:epidemics}, and electronic pathways
\cite{kleinberg07:wlessepi,wang09:spreading}. 

Analyzing epidemic behavior is complemented by works seeking to control such processes for various purposes. For example, various authors have looked at designing algorithms for optimal seeding in networks to maximize the spread of an SIR epidemic \cite{kempekleintar03:influence}, for ensuring long-lasting SIS epidemics by distributing virulence across edges \cite{wagner2005designing}, and for efficient routing over spatial networks with fixed long-range links \cite{kosklein08:pathways}. 

In settings where the epidemic is undesirable, the topic of interest is the worst-case behavior of such processes, in particular, when controlled by an adversary. For example, malware has been studied by modeling it as an adversarially-controlled deterministic epidemic on the complete graph \cite{khouzani12}. In stochastic settings, understanding optimal design \cite{wagner2005designing,lelarge2009} also helps characterize the worst-case spread of epidemics -- moreover, this helps in the design of vaccination strategies to counter them in adversarial settings \cite{borgs2010distribute,lelarge2009}.

One-way epidemic spreading/dissemination in networks is an important primitive in communication engineering. Notable studies consider settings where all network nodes are simultaneously mobile -- for designing gossip algorithms \cite{sardim09:gossipmob,kempekleindem04:spatgossip} or improving the capacity of wireless networks \cite{grosstse02:mobility}
-- and analysis of rumor spreading on fully-connected graphs
\cite{pittel87:rumor,sanghajek07:gossip}. The two most-relevant streams of work for our paper are Kesten and others' investigations into first-passage percolation \cite{Kesten03, kesten93:speed}, and Alon's study of deterministic spreading with external-agents in $d$-dimensional hypercubes \cite{AlonCube} -- our work considers the impact such external agents, but in general networks, and wherein the underlying spread is via the SI dynamics (first-passage percolation).

\section{Model for Epidemics with External Agents}
\label{sec:model}

We consider underlying graph $G=(V,E)$, or alternately, a sequence of graphs $G_n = (V_n,E_n)$ indexed by $n$, with the $n$-th graph having $n$ nodes; for ease of notation, we label the nodes in $V$ from $1$ to $n$. For instance, $G_n$ could be the ring graph with $n$ nodes, or a (2-dimensional) $\sqrt{n} \times\sqrt{n}$ grid. For convenience, we will drop the subscript $n$ for all quantities pertaining to the graph $G_n$ when the context is clear.

We model the epidemic spread on underlying graph $G$ (or $G_n$) using a continuous-time \emph{spreading process} $(S(t))_{t \geq 0}$. At each time $t$, $S(t)= (S_1(t),\ldots,S_n(t)) \in \{0,1\}^V$ denotes the `infection state' of the nodes in $V$: $S_i(t) = 0$ indicates that node $i \in V$ is healthy (or `susceptible') at time $t$, while  $S_i(t) = 0$ denotes that it is `infected'. $\mathcal{S}(t)$ denotes the set of infected nodes at time $t$, \emph{i.e.}  $\mathcal{S}(t) \bydef \{i \in V: S_i(t) = 1\}$, and we use $\mathcal{N}(S(t))$ to denote its size. In order to capture the effect of external agents, the evolution of $S(t)$ is assumed to be driven by the following modes of infection spread:

\begin{itemize}
\item {\em Intrinsic infection:} This follows the standard SI dynamics or first-passage percolation process \cite{Kesten03}. Initially, at $t = 0$, all nodes are healthy, except for a single infected node (that can be arbitrarily chosen). Once a node is infected, \emph{it always remains infected}, and infects each of its neighboring susceptible nodes at independent random times which are exponentially distributed with mean $1$. %We term this form of spread as infection via \emph{basic} contact.
\item {\em External infection:} At time $t$, in addition to being infected by its neighbors in $G$, each node $i$ is susceptible to an \emph{external infection} %(alternately \emph{long-range contact}) 
with an exponential infection-rate $L_i(t)$. The external infection-rate vector $\bar{L}(t) \equiv (L_i(t))_{i \in V}$ can vary with time $t$ and can depend on the state of the network $S(t)$.
\end{itemize}
The dependence of the external-infection rate $\bar{L}(t)$ on the network state allows us to model a wide range of external infection processes transcending the structure of the underlying network ($G$). For instance:
\begin{enumerate}
\item $\bar{L}(t) = 0$ represents infection occurring only through edges of the underlying graph (the standard \emph{SI dynamics or first-passage percolation process}).
\item Small-world networks: Both Kleinberg \cite{marngu04:kleinberg} and also Watts-Strogatz \cite{wattstro98:smallworld} show that adding a few fixed \emph{long-range} edges onto structured networks can dramatically reduce routing time and diameter. Our model captures the dynamics of infection spreading with $L$ such additional edges, say, by letting $L_i(t)$ be the number of long-range edges incident on node $i$ that have an infected node at the other end at time $t$.
\item Long-range edges over the underlying graph, instead of being drawn in a \emph{static} manner, can be \emph{dynamically} added and deleted as time progresses. For instance, infected nodes can ``throw out'' fresh sets of long-range edges at certain times -- this corresponds to choosing fresh sets of long-range infection targets depending on network state or other parameters.
\item Moving beyond long-range structures, the external infection vector can also be used to model `virtual mobility'; the external infection could be caused by one or several mobile agents, whose position is unconstrained by the graph, and which thus spread infection to various parts of the network with corresponding rates $\bar{L}(t)$. 
\item At an even more abstract level, the external agent can be viewed as an external information source with \emph{bandwidth} $L(t)$, which it can share across nodes of the graph. This can be used to design optimal processes for viral advertising, spread of software updates, etc.
\end{enumerate}

%Throughout this paper, we have used the term \emph{intrinsic infection} (likewise \emph{external infection}) interchangeably with \emph{basic contact} (likewise \emph{long-range contact}). 
To complete our system description, we term the quantity $L(t)=||\bar{L}(t)||_1$ as the external \emph{virulence} at time $t$. In this work, we restrict ourselves to scenarios where the virulence $L(t)$ is uniformly (i.e., for all $t$) upper and lower bounded by functions $L_{\max}(n), L_{\min}(n)$ respectively, that can scale with the network size $n$. Finally, we define the \emph{spreading time} of the epidemic as $T \bydef \inf \{t \geq 0: S(t)=\mathds{1}_n\}$. Our concerns are both to (a) analyze the spreading time under certain natural external infection strategies, and (b) show universal lower-bounds on the spreading time for common structured networks, over a wide class of external infection strategies.

\emph{General Notation}: We use $\mathbb{Z}$ and $\mathbb{R}$ for the set of integers and reals respectively. We use the standard asymptotic notation ($O$, $\Theta$, $\Omega$, $\omega$ and $o$) to characterize the growth rate of functions\footnote{Briefly: $f(n)=\Omega(g(n))$ (alternately $g(n)=O(f(n))$) implies \emph{there exists some} $k>0$ such that $\forall n>N$ (for some large enough $N$), we have $f(n)\geq kg(n)$, while $f(n)=\omega(g(n))$ (alternately $f(n)=o(g(n))$) implies that for all $n$ large enough, we have $f(n)\geq kg(n)$ \emph{for all} $k>0$}. For random variables $X$ and $Y$, the notation $X \leq_{st} Y$ and $Y \geq_{st} X$ means that $Y$ stochastically dominates $X$, \emph{i.e.}  $\mathbb{P}[Y\geq z] \geq \mathbb{P}[X \geq z]$ for all $z$. Where necessary, we follow the convention that $1/\infty \bydef 0$.

\section{Main Results and Discussion}
\label{sec:results}

We now state our results, and discuss how they translate to different models of externally-aided epidemic spreading. Our results are of two kinds: upper bounds for spreading time for \emph{general graphs} under \emph{specific policies} (in particular, random and greedy spreading policies), and lower bounds under \emph{any policy} for \emph{specific graphs} (in particular, rings/line graphs, $d$-dimensional grids and the geometric random graph); these are representative of graphs which are spatially-constrained, and where our bounds are tight. Our results are in terms of properties of the graph, and the bounds $L_{\min}$ and $L_{\max}$ on the virulence $L(t)$ -- recall that the latter can scale with $n$. We conclude the section with a discussion of the applicability of our bounds and techniques in various settings.

\subsection{Upper Bounds for Specific Policies}
\label{sec:resultsubnd}

Our first main result is an upper-bound on the spreading time (both in expectation and with high probability) of the homogeneous external-infection policy, for a general graph $G$. {\corr Such a policy is equivalent to one in which the (single) external agent chooses a node uniformly at random and starts infecting it; hence we hereafter refer to it as the \emph{random spreading policy}. The following result states that given a uniform partition of $G$, the time taken by random spreading to finish is of the order of the number of pieces or the maximum piece diameter, whichever dominates.}
\begin{thm}[Upper bound: Random Spreading, Diameter version]
\label{thm:genach}
Suppose $||\bar{L}(t)||_1 \geq L_{\min} \geq 0$ for all $t \geq 0$, and suppose $L_i(t)=L(t)/n$ for all $i\in V$ (random spreading). Given graph $G$ and any partition $\Pi(G)=\bigcup_{i=1}^{g(\Pi)} G_{i}$ by $g(\Pi)$ connected subgraphs $G_{i}$, each with size at least $s(\Pi)$ and diameter at least $d(\Pi)$. Then:
\begin{enumerate}
\item (Mean spreading time) $\mathbb{E}[T] \leq h(\Pi)\cdot(\log n+1)$,\\ where $h(\Pi)\equiv\max\left(\frac{n}{s(\Pi)L_{\min}},d(\Pi)\right)$.
\item (Spreading time concentration) If $g(\Pi)\leq cn^\delta$ for some constants $c,\delta > 0$, then for any $\gamma>0$, we have:
\begin{align*}
\mathbb{P}[T\geq\kappa h(\Pi)\log n] \leq c'n^{-\gamma},
\end{align*}
where $\kappa\geq 1+\frac{\gamma}{\delta}$ and $c'=2c^{-\kappa+1}$.
\end{enumerate}
\end{thm}

As a preview as to how to apply this result, consider a line graph on $n$ nodes -- this can be partitioned into $\sqrt{n}$ segments each of length $\sqrt{n}$. Then, by the above result, the random spreading policy takes $O(\sqrt{n}\log n)$ time to infect all nodes. 

Next we obtain a spreading time bound for a \emph{greedy} spreading policy, which we call the \emph{Greedy Subgraph Infection} (or GSI) policy. The policy is based on any given graph partition like that in the above theorem, and is as follows: given partition $\Pi$ with subgraphs $G_i, i\in\{1,2,\ldots,g(n)\}$, they are infected through sequential greedy (as opposed to \emph{homogeneous}) external infection, i.e., $||L(t)||_1 = L_{\min}$, and $L(t)$ is supported on a single node $j(t)$ within any \emph{maximally healthy} subgraph at time $t$ (i.e., one which minimizes $|G_i\cap\mathcal{S}(t)|$). The spreading time of the GSI policy is $O(h(\Pi))$ in expectation and \emph{w.h.p.}, which we state as follows:

\begin{thm}[Upper bound for GSI Policy]
\label{thm:GSI}
Suppose graph $G$ admits a partition $\Pi(G)=\bigcup_{i=1}^{g(\Pi)} G_{i}$ of connected subgraphs $G_{i}$, each with diameter $\leq d(\Pi)$; further, suppose $d(\Pi)\geq \log n$. Then for the Greedy Subgraph Infection policy, we have:
$$\mathbb{E}[T] \leq\max\left(\frac{g(\Pi)}{L_{\min}},4d(\Pi)\right).$$
\end{thm}

Again, applying this to the line graph with $n$ nodes, we now get a spreading time of $O(\sqrt{n})$, which improves on the previous bound by a factor of $\log n$.

Finally we give an alternate bound for the spreading time with random external-agents in terms of a different structural property intimately related to spreading ability in graphs -- the \emph{conductance} (also called the \emph{isoperimetric constant}). The conductance $\Psi(G)$ of a graph $G = (V,E)$ is defined as \[\Psi(G) \bydef \inf_{S \subset V: 1 \leq S \leq \frac{|V|}{2}} \frac{E(S,V\setminus S)}{|S|},\]
where for $A, B \subseteq V$, $E(A,B)$ denotes the number of edges that have exactly one endpoint each in $A$ and $B$. The conductance of a graph is a widely studied measure of how fast a random walk on the graph converges to stationarity \cite{bremaud:mcbook,dshah:gossipbook}. %; the higher the conductance, the lesser `bottlenecks' it offers for spreading. 
Analogous to Theorem \ref{thm:genach}, the next result formalizes the idea that spreading on a graph is dominated by the larger of (a) the number of pieces it can be broken into, and (b) the reciprocal of the piece conductance. 

\begin{thm}[Upper bound: Random Spreading, Conductance version]
\label{thm:condach}
Suppose $||\bar{L}(t)||_1 \geq L_{\min} \geq 0$ for all $t \geq 0$, and suppose $L_i(t)=L(t)/n$ for all $i\in V$ (random spreading). Further, graph $G$ has a partition $\Pi(G)=\bigcup_{i=1}^{g(\Pi)} G_{i}$ of $g(\Pi)$ connected subgraphs, each with $s_{\min}(\Pi)\leq|G_i|\leq s_{\max}(\Pi)$ and conductance $\geq\Psi(\Pi)$. Then:
\begin{enumerate}
\item (Mean spreading time) $\mathbb{E}[T] \leq k(\Pi)\cdot(\log g(\Pi)+1)$,\\ where $k(\Pi) \equiv \max\left(\frac{n}{s_{\min}(\Pi)L_{\min}},\frac{2\log s_{\max}(\Pi)}{\Psi(\Pi)}\right)$.
\item (Spreading time concentration) For any $\kappa>0$, we have:
$$\mathbb{P}[T\geq \kappa\cdot k(\Pi)\log g(\Pi)] \leq \frac{\pi^2}{9\kappa^2(\log g(\Pi))^2}.$$
\end{enumerate}
\end{thm}

\subsection{Results: Lower Bounds for Specific Topologies}
\label{sec:resultslbnd}

Having estimated the spreading time of random and greedy external-infection policies, a natural question that arises at this point is: How do these policies compare with the best (possibly omniscient and adversarial) policy, \emph{i.e.}, with the lowest possible spreading time among \emph{all other infection strategies}? To this end, we show that for certain commonly studied \emph{spatially-limited} networks (i.e., with diameter $\Omega(n^{\alpha})$ for some $\alpha>0$), such as line/ring networks, $d$-dimensional grids and random geometric graphs, random spreading yields the best order-wise spreading time up to a logarithmic factor (and the GSI policy yields the best order-wise spreading time) to spread infection. In particular, for each of these classes of graphs, we establish lower bounds on the spreading time of \emph{any} spreading strategy, that match the upper bounds established in the previous section.

\noindent{\bf Rings/Linear Graphs:} Let $G_n = (V_n,E_n)$ be the ring graph on $n$ nodes -- $V_n \bydef \{v_1,\ldots,v_n\}$, $E_n \bydef\{(v_i,v_j): j-i \equiv 1 \mbox{ (mod $n$)}\}$. By partitioning $G_n$ into $\sqrt{nL_{\min}}$\footnote{For ease of notation, we assume that fractional powers of $n$ take integer value; if not, the bounds can be modified by appropriately taking ceiling/floor.} segments, %\emph{i.e.} $G_{n,i}$ is the subgraph induced by $v_{(i-1)\sqrt{n}+1},\ldots,v_{i\sqrt{n}}$, where $i$ ranges from $1,\ldots,\sqrt{n}$. The 
each of length $\sqrt{n/L_{\min}}$, from Theorem \ref{thm:genach} we get:
\begin{corollary}[Spreading time for random external-infection on ring graphs]
\label{corr:ringach}
For the random external-spreading policy on the ring/line graph $G_n$, we have:

\begin{enumerate}
\item $\mathbb{E}[T] \leq\sqrt{\frac{n}{L_{\min}}}\cdot(\log n+1)$,
\item For any $\gamma > 0$, if $L_{\min}\leq n^{\delta}$ for some $\delta>0$, then: 
$$\mathbb{P}\Big[T \geq \left(1+\frac{2\gamma}{1+\delta}\right)\sqrt{\frac{n}{L_{\min}}}\log n\Big] \leq 2n^{-\gamma}.$$
\end{enumerate}
\end{corollary}

Thus the spreading time on an $n$-ring, with random external-infection, is $O(\sqrt{n/L_{\min}}\log n)$, both in expectation and with high probability. We now present a corresponding lower bound, that shows that conversely, the spreading time on a grid or line graph with \emph{any} (possibly omniscient) external-infection strategy must be $\Omega(\sqrt{n})$, both in expectation and with high probability. We state this for $L_{\max}=1$, but later generalize the result when considering $d$-dimensional grids.

\begin{thm}[Lower bound for ring graphs]
\label{thm:ring_lower}
For the ring graph $G_n$ with $n$ nodes, given $L_{\max}\leq 1\,\forall t\geq 0$, then for any external-spreading policy, we have:
\begin{enumerate}
\item $\mathbb{E}[T] \geq \frac{2}{3}\sqrt{n}.$
\item $\mathbb{P}\left[T <\sqrt{n}/8\right]\leq 4e^{-\sqrt{n}/8}.$
\end{enumerate}
\end{thm}

\noindent{\bf $d$-dimensional Grids:} Building on the previous result, we next show that the random spread strategy achieves the order-wise optimal spreading time even on $d$-dimensional grid networks where $d \geq 2$. Given $d$, the $d$-dimensional grid graph $G_n = (V_n,E_n)$ on $n$ nodes is given by $V_n \bydef \{1,2,\ldots,n^{1/d}\}^d$, and $E_n \bydef \{(x,y) \in V_n \times V_n: ||x-y||_1 = 1\}$.
 
Consider a partition of $G_n$ into $(n/L_{\min})^{1/(d+1)}$ identical and contiguous `sub-grids' $G_{n,i}$, $i = 1,\ldots, n^{1/(d+1)}$ (for details, refer to Section \ref{sec:gridlbnd}). With such a partition, an application of Theorem \ref{thm:genach} shows that:
\begin{corollary}[Spreading time for random external-infection on $d$-dimensional grids]
\label{corr:gridach}
For the random external-spreading policy on an $n$-node $d$-dimensional grid $G_n$, we have:
\begin{enumerate}
\item $\mathbb{E}[T]\leq\left(\frac{n}{L_{\min}}\right)^{1/(d+1)}\cdot(\log n+1)$,
\item For any $\gamma > 0$, if $L_{\min}\leq n^{\delta}$ for some $\delta>0$, then: 
$$\mathbb{P}\Big[T \geq \left(1+\frac{\gamma(1+d)}{1+\delta}\right)\mathbb{E}[T]\Big]\leq 2n^{-\gamma}$$
\end{enumerate}
\end{corollary}
\emph{i.e.}, {\em the spreading time with random external-infection on a $d$-dimensional $n$-node grid is $O\left(\left(\frac{n}{L_{\min}}\right)^{1/(d+1)} \log n\right)$}
in expectation and with high probability. 

In contrast, we show that \emph{any} external-infection policy on a grid takes time $\Omega\left(\left(\frac{n}{L_{\max}}\right)^{1/(d+1)}\right)$ to finish infecting all nodes with high probability, and consequently also in expectation, thereby showing the above bound is order-optimal.
\begin{thm}[Lower bound for $d$-dimensional grids]
\label{thm:gridconv}
Let $G_n$ be a symmetric $n$-node $d$-dimensional grid graph. Suppose that $||\bar{L}(t)||_1 \leq L_{\max}=\omega(n)$ for all $t\geq 0$. Then, there exist $c_1, c_2 > 0$, not depending on $n$, such that:
\[  \mathbb{P}\left[T \leq c_1 \left(\frac{n}{L_{\max}}\right)^{\frac{1}{d+1}}\right] = O\left(e^{-c_2\left(\frac{n}{L_{\max}}\right)^{\frac{1}{2d+2}}}\right).\]
Further, if $L_{\max} = O(n^{1-\epsilon})$ for some $\epsilon \in (0,1]$, then:
\[ \mathbb{E}[T] = \Omega \left(\left(\frac{n}{L_{\max}}\right)^{\frac{1}{d+1}}\right).\]
\end{thm}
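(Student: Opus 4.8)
\emph{Proof strategy.} The plan is to exploit the fact that, however cleverly the external agent behaves, it can only ``teleport'' infection to new locations at total rate $\le L_{\max}(n)$, while from each freshly seeded location the purely intrinsic contact process grows its infected cluster only linearly in time. In detail: suppose the epidemic finishes by time $T$, and let $v_0 = 1$ be the initial node and $v_1,\ldots,v_N$ (born at times $t_1 \le \cdots \le t_N \le T$) be the nodes struck by external infections in $[0,T]$. I would couple the intrinsic dynamics with first-passage percolation (FPP) on $\mathbb{Z}^d$ carrying i.i.d.\ $\mathrm{Exp}(1)$ edge weights (legitimate, since the basic-contact clocks are memoryless and the external process does not alter $G$): then the infection time of any node $x$ equals $\min_j\big(t_j + \tau_{v_j}(x)\big)$, where $\tau_v$ is the FPP metric, so in particular $\mathcal{S}(T) \subseteq \bigcup_{j=0}^{N} B_{v_j}(T)$ with $B_v(t)\bydef\{x:\tau_v(x)\le t\}$ the FPP ball of radius $t$ about $v$. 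Since each $B_v(t)$ has only $O(t^d)$ nodes w.h.p.\ and $N = O(L_{\max}(n)\,T)$ w.h.p., this forces $n = |\mathcal{S}(T)| = O\big(L_{\max}(n)\,T^{d+1}\big)$, i.e.\ $T = \Omega\big((n/L_{\max}(n))^{1/(d+1)}\big)$.

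To make this rigorous I would bound two random quantities. First, the \emph{number of external seeds}: because $\|\bar{L}(t)\|_1 \le L_{\max}(n)$ holds uniformly in $t$ --- even under an omniscient, history-dependent choice of $\bar{L}(\cdot)$ --- the external-infection point process on $[0,t]$ is stochastically dominated by a homogeneous Poisson process of rate $L_{\max}(n)$, so a Chernoff bound gives $\mathbb{P}[N > 2L_{\max}(n)\,t] \le e^{-\Theta(L_{\max}(n)\,t)}$. Second, the \emph{small-time cluster size}: any $x$ with $\|x-v\|_1 = \ell$ has $\tau_v(x)$ equal to the weight of some self-avoiding path of length $\ge \ell$ out of $v$; there are at most $(2d)^\ell$ such paths and each has weight stochastically at least a $\mathrm{Gamma}(\ell,1)$ variate, so the union bound together with $\mathbb{P}[\mathrm{Gamma}(\ell,1)\le t]\le (et/\ell)^{\ell}$ yields, for a large enough constant $\Lambda$, $\mathbb{P}\big[\exists x:\|x-v\|_1\ge\Lambda t,\ \tau_v(x)\le t\big]\le e^{-\Theta(t)}$; hence $B_v(t)$ lies in an $\ell_1$-ball of radius $\Lambda t$ and so has $O(t^d)$ nodes, outside an event of probability $e^{-\Theta(t)}$. (This is where FPP large-deviation theory, in the spirit of Kesten, enters.)

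Combining, set $T_* \bydef c_1 (n/L_{\max}(n))^{1/(d+1)}$ with $c_1$ a small constant. On the intersection of $\{N \le 2L_{\max}(n)T_*\}$ and $\{$every $B_v(T_*)$, $v \in V_n$, has $\le C\,T_*^{d}$ nodes$\}$, a finished epidemic at time $T_*$ would give $n = |\mathcal{S}(T_*)| \le (2L_{\max}(n)T_* + 1)\cdot C T_*^d \le C' L_{\max}(n)\,T_*^{d+1} < n$, a contradiction. Hence $\mathbb{P}[T \le T_*]$ is at most the sum of the two failure probabilities above, which, after bounding the $O(n)$ union-bound prefactor in the cluster-size estimate, produces the stated tail $O\big(e^{-c_2 (n/L_{\max}(n))^{1/(2d+2)}}\big)$. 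The expectation bound then follows from $\mathbb{E}[T] \ge T_*\,\mathbb{P}[T > T_*]$: when $L_{\max}(n) = O(n^{1-\epsilon})$ we have $T_* = \Omega(n^{\epsilon/(d+1)}) \to \infty$, so the tail above is $o(1)$ and $\mathbb{E}[T] = \Omega\big((n/L_{\max}(n))^{1/(d+1)}\big)$.

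The step I expect to be most delicate is applying the cluster-size estimate \emph{uniformly over the adaptively chosen seeds}: the locations $v_j$ are selected by the (omniscient) adversary as a function of the realized history, so one cannot simply invoke the single-center bound at $N+1$ fixed points. One clean way around this is to note that the event ``some $B_v(T_*)$, $v\in V_n$, is too large'' is a property of the fixed weight field alone, hence independent of the policy, at the cost of a crude factor $n$; to remove that factor (and sharpen the exponent when $T_* \lesssim \log n$) one instead uses memorylessness of the edge clocks, so that conditionally on the history up to a seed's birth the weight environment it sees going forward is fresh and the union need only run over the $O(L_{\max}(n)T_*)$ seeds that actually occur. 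A minor additional point is to justify carefully the Poisson stochastic domination of the external point process when $\bar{L}(t)$ is an arbitrary predictable intensity; this is standard (e.g.\ a thinning/coupling argument on a driving Poisson random measure).
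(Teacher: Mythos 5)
Your proposal is correct in substance and follows the same overall strategy as the paper: dominate the external seeding by a Poisson process of rate $L_{\max}(n)$, control each seeded cluster by first-passage-percolation growth estimates, and conclude $n \lesssim L_{\max}(n)\,T^{d+1}$, with the expectation bound following once the tail is $o(1)$. The differences are technical but worth noting. (i) The step you flag as most delicate -- adaptively chosen seed locations -- is precisely what the paper's construction is designed to avoid: its dominating process lets each cluster grow as an \emph{independent} copy of the contact process on its own exclusive infinite grid $\mathbb{Z}^d$, so the cluster sizes are i.i.d., independent of the policy, and the union bound runs only over the $O(L_{\max}(n)t)$ Poisson-seeded clusters; no union over all $n$ vertices is needed. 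Your first fix (the event that some ball $B_v(T_*)$ is large depends only on the weight field) is valid but costs the factor $n$, which matches the claimed exponent only when $T_*\gtrsim\log n$; your second fix, as sketched, is a little glib (the environment ahead of a new seed is not entirely fresh, since earlier clusters have partially revealed edge clocks in the region), and the clean repair is exactly the paper's exclusive-copies domination. That said, the paper's own intermediate bound carries an $L_{\max}(n)$ prefactor that it absorbs into the exponent, which likewise requires $\log L_{\max}(n)=o(\sqrt{T_*})$, so in the regime where the theorem is actually used (e.g. $L_{\max}(n)=O(n^{1-\epsilon})$, hence $T_*$ polynomially large) your version is not weaker. (ii) For the per-cluster growth estimate, the paper cites Kesten's shape theorem, giving failure probability $c_1t^{2d}e^{-c_2\sqrt t}$ (whence the $1/(2d+2)$ exponent in the statement); your elementary self-avoiding-path counting with the Gamma tail gives the stronger $e^{-\Theta(t)}$ and is a perfectly good, more self-contained substitute. (iii) Your derivation of the expectation bound from $\mathbb{E}[T]\geq T_*\,\mathbb{P}[T>T_*]$ is simpler and more direct than the paper's Borel--Cantelli/Fatou argument and reaches the same conclusion.
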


\noindent{\bf Geometric Random Graphs:} Finally, we shift focus from structured graphs to a popular family of random graphs, widely used for modeling physical networks. The \emph{Geometric Random Graph (RGG)} is a random graph model wherein $n$ points (\emph{i.e.} nodes) are placed i.i.d. uniformly in $[0,1] \times [0,1]$. Two nodes $x, y$ are connected by an edge iff $||x-y|| \leq r_n$, where $r_n$ is often called the \emph{coverage radius}. The RGG $G_n = G_n(r_n)$ consists of the $n$ nodes and edges as above.

It is known that when the coverage radius $r_n$ is above a critical threshold of $\sqrt{\log n/\pi}$, the RGG is connected with high probability \cite{gupta:connectivity}. In our last set of results, we show that similar to before, random spreading on RGGs in this critical connectivity regime is optimal upto logarithmic factors. First, we show with high probability that random spreading finishes in time $O(\sqrt[3]{n} \log n)$:
\begin{thm}[Spreading-time for random external-infection on the RGG]
\label{thm:rggach}
For the planar random geometric graph $G_n(r_n)$, if $r_n \geq
\sqrt{\frac{5(1+\gamma)\log n}{n}}$, for random external spreading, we have:
\begin{enumerate}
\item  If $\gamma\geq \frac{2}{3}$, then:
$\,\,\,\mathbb{E}[T]\leq 2\sqrt[3]{n/L_{\min}} \log n$
\item For any $\gamma>0$, choosing $\kappa\geq 1+3\gamma/(1+\delta)$ we have:
$$\mathbb{P}[T \geq \kappa\sqrt[3]{n/L_{\min}} \log n]\leq 2n^{-\gamma}$$
\end{enumerate}
\end{thm}

Finally, we follow this up with a converse result that shows that no other policy can better this time (order-wise, up to the logarithmic factors) with significant probability. This directly parallels the earlier results about spreading times on $2$-dimensional grids, where random mobile spread exhibits the same optimal order of growth. 
\begin{thm}[Lower bound for the RGG]
\label{thm:rggconv}
For the planar geometric random graph $G_n$ with $r_n = O(\sqrt{\log n / n})$ with a single random initially-infected node, and any spreading policy with $L_{\max} = O(n^{1-\epsilon})$ for some $\epsilon \in (0,1]$, $\exists$ $\beta > 0$ such that: $$\lim_{n \to \infty}\mathbb{P}\left[T \geq \beta \frac{\sqrt[3]{n/L_{\max}}}{\log^{4/3}n}\right] = 1.$$
\end{thm}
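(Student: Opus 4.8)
The plan is to follow the template of the grid lower bound (Theorem~\ref{thm:gridconv} with $d=2$): bound the infected set at any time $t$ by the union of first-passage-percolation (FPP) balls grown around the external ``seeds,'' bound the number of seeds by a Poisson comparison, and then replace the grid volume-growth estimate by one adapted to the RGG. (Equivalently one could first coarsen $G_n(r_n)$ to a grid of $\Theta(n/\log n)$ cells and invoke Theorem~\ref{thm:gridconv} as a black box; I sketch the direct route.) First I would condition on a ``good'' realization of $G_n(r_n)$ on which: (i) $G_n$ is connected; (ii) the maximum degree is $O(\log n)$; and (iii) every Euclidean disk of radius $\rho\le 1$ contains $O(n\rho^{2})+O(\log n)$ nodes. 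In the regime $r_n = \Theta(\sqrt{\log n/n})$ all three hold with probability $1-o(1)$ by standard binomial concentration together with a union bound over $O(n)$ nodes; from here on one argues only over the randomness of the infection.

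Next I would record the elementary domination. Since a node becomes infected either through an edge of $G_n$ (intrinsic spread with i.i.d.\ $\mathrm{Exp}(1)$ passage times, i.e.\ FPP on $G_n$) or through an external contact, we have $\mathcal{S}(t)\subseteq\bigcup_{j} B_{\mathrm{FPP}}(s_j,\,t-\sigma_j)$, where $(s_1,\sigma_1)=(1,0)$ and $\{(s_j,\sigma_j)\}_{j\ge 2}$ are the external-infection events and their times, and $B_{\mathrm{FPP}}(v,s)$ is the set of nodes within FPP distance $s$ of $v$. Hence $\mathcal{N}(S(t))\le N_{\mathrm{ext}}(t)\cdot\max_v|B_{\mathrm{FPP}}(v,t)|$ with $N_{\mathrm{ext}}(t)$ the number of external events by time $t$. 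Because $\|\bar L(t)\|_1\le L_{\max}(n)$ for all $t$, the counting process $N_{\mathrm{ext}}$ is stochastically dominated by a rate-$L_{\max}(n)$ Poisson process, so $N_{\mathrm{ext}}(t)\le 2L_{\max}(n)t + O(\log^2 n)$ with probability $1-o(1)$; at the target scale $t^{*}:=\beta\,\sqrt[3]{n/L_{\max}(n)}\,/\log^{4/3}n$ the first term dominates, the hypothesis $L_{\max}(n)=O(n^{1-\epsilon})$ guaranteeing $t^{*}\to\infty$ and that the total number of seeds is $o(n)$.

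The crux is the uniform bound $\max_v|B_{\mathrm{FPP}}(v,t)| = O\big(t^{2}\log^{3}n\big)$, valid for all $t$ in the relevant range with probability $1-o(1)$. The mechanism: every RGG edge has Euclidean length $\le r_n$, so reaching Euclidean distance $\rho$ from $v$ needs a path of at least $k_0:=\rho/r_n$ hops, whose total passage time stochastically dominates a $\mathrm{Gamma}(k_0,1)$ variable; a Chernoff bound gives $\mathbb{P}[\mathrm{Gamma}(k,1)\le t]\le e^{-t}(et/k)^{k}$, while the number of self-avoiding $k$-hop paths from $v$ is at most $\Delta^{k}$ with $\Delta=O(\log n)$. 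Summing over $k\ge k_0$, over the $O(n)$ possible centers, and over $O(\log n)$ dyadic time scales shows that with probability $1-o(1)$, simultaneously for all such $v$ and $t$, $B_{\mathrm{FPP}}(v,t)$ lies in the Euclidean disk of radius $O(r_n t\log n)$ about $v$, which on the good event contains $O\big(n(r_n t\log n)^{2}\big)=O\big(t^{2}\log^{3}n\big)$ nodes (using $nr_n^{2}=O(\log n)$). This is precisely the step that differs from Theorem~\ref{thm:gridconv}: on $\mathbb{Z}^{d}$ the degree is bounded, whereas here the (only polylogarithmic) blow-up of the degree forces the path-counting factor $\Delta^{k}$, and it is the accounting of these logarithms — through the $\Delta^{k}$ path count, the $nr_n^{2}=\Theta(\log n)$ density, and the final counting below — that produces the $\log^{4/3}n$ in the statement rather than the single logarithmic factor a naive grid-style calculation with $d=2$ would suggest. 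I expect this uniform FPP-ball estimate to be the main obstacle; everything else is routine concentration.

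Finally I would combine the three pieces. If $T_\pi<t^{*}$ then $\mathcal{N}(S(t^{*}))=n$; yet on the intersection of the good event and the two concentration events, $\mathcal{N}(S(t^{*}))\le N_{\mathrm{ext}}(t^{*})\cdot\max_v|B_{\mathrm{FPP}}(v,t^{*})| = O\big(L_{\max}(n)\,t^{*}\cdot (t^{*})^{2}\log^{3}n\big) = O\big(\beta^{3} n\,\log^{3-4}n\big) = O\big(\beta^{3} n/\log n\big)$, which is $<n$ for $n$ large (for a suitable constant $\beta$). This is a contradiction, so on that intersection $T_\pi\ge t^{*}$; since the intersection has probability $1-o(1)$, we conclude $\mathbb{P}[T_\pi\ge t^{*}]\to 1$, which is the claim.
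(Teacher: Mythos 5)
Your proposal is correct and shares the paper's overall architecture---dominate the (possibly omniscient, adaptive) policy by at most a Poisson$(L_{\max}(n)t)$ number of non-interfering seeds, bound the volume each seeded cluster can reach by time $t$, multiply, and compare with $n$---but the key growth estimate is obtained by a genuinely different route. The paper never runs FPP on the RGG itself: it first tiles $[0,1]^2$ into $r_n\times r_n$ cells (each holding $O(\log n)$ nodes w.h.p.), treats the cells as vertices of a grid with diagonal edges whose passage times are bounded below by $\mathrm{Exp}(\mu\log^2 n)$ variables, and then applies its Lemma~\ref{lem:shape} (a path-counting shape estimate on that augmented grid) to bound each cluster's extent in tiles, so the logarithms enter through the tile rate $\log^2 n$ and the $O(\log n)$ occupancy per tile; this keeps the argument closely parallel to Theorem~\ref{thm:gridconv}. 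You instead prove the ball bound directly on $G_n(r_n)$ using max degree $\Delta=O(\log n)$, edge length at most $r_n$, the $\Delta^k(et/k)^k$ path-count/Chernoff bound, and a uniform disk-occupancy estimate---the same Chernoff-over-paths technique that underlies Lemma~\ref{lem:shape}, but applied in one step to the random graph. What each buys: the paper's coarsening reuses the grid machinery verbatim at the price of looser logarithmic bookkeeping, while your direct estimate requires the extra good-event conditions (connectivity, degree, density) but gives the sharper per-cluster bound $O(t^2\log^3 n)$ nodes, making the final count $O(\beta^3 n/\log n)<n$ and the stated $\log^{4/3}n$ factor transparent. Two minor tightenings: the dyadic union over time scales is unnecessary, since FPP balls are monotone in $t$ and bounding them at $t^*$ suffices; and you should phrase the containment $\mathcal{S}(t)\subseteq\bigcup_j B_{\mathrm{FPP}}(s_j,t-\sigma_j)$ via the pathwise (per-edge passage time) representation, so that it holds for every realization regardless of how the state-aware adversary places its seeds---this is precisely what the paper's ``standard coupling argument'' is meant to cover, and with that said explicitly your argument is complete at the same level of rigor as the paper's sketch.
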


\subsection{Discussion and Extensions}
The framework of epidemic spreading with external agents encompasses many known models for epidemic spreading with long-range interactions (as we discussed previously in Section \ref{sec:model}): this is done by appropriately specifying $\bar{L}(t)\in\mathbb{R}_+^{|V|}$ as a function of time $t$, network topology and network-state $S(t)$. For example, the presence of a single additional `static long-range' link $(i,j)\in V^2$ is equivalent to setting $L_i(t)=\beta\mathds{1}_{S_j(t)=1}, L_j(t)=\beta\mathds{1}_{S_i(t)=1}$ and $L_k(t)=0\,\forall k\notin\{i,j\}$ (where $\beta$ is the rate of spreading on the edge). We now discuss the implications of our results and techniques on such models of external infection sources.

\noindent\textbf{Static Links:} To demonstrate our results in the context of a graph overlaid with additional static edges, consider a $d$-dimensional grid with $L(n)$ additional static links. Then we have the following lower-bound for the spreading time $T$ (obtained by setting $L_{\max}=L(n)$ in Theorem \ref{thm:gridconv}).
\begin{corollary}
\label{corr:gridLB}
Let $G_n$ be a symmetric $n$-node $d$-dimensional grid graph, with $L(n)$ additional static links. If $L(n) = O(n^{1-\epsilon})$ for some $\epsilon \in (0,1]$, then $\mathbb{E}[T] = \Omega \left(\left(\frac{n}{L(n)}\right)^{\frac{1}{d+1}}\right).$
\end{corollary}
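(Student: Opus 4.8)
The plan is to recognise that a $d$-dimensional grid $G_n$ overlaid with $L(n)$ extra static edges is a special case of the external-infection model of Section~\ref{sec:model}, and then to invoke the universal converse of Theorem~\ref{thm:gridconv}. Run the SI contact process on $G_n$ together with the set $\mathcal{L}$ of the $L(n)$ additional links (say each extra link spreads at unit rate, as the grid edges do -- any fixed rate $\beta = O(1)$ works identically). I would couple this with a process on the bare grid $G_n$ in which, besides the intrinsic nearest-neighbour spread, each node $i$ is externally infected at rate
\[
  L_i(t) \;=\; \sum_{j \,:\, (i,j)\in\mathcal{L}} \mathds{1}_{\{S_j(t)=1\}} ,
\]
i.e.\ node $i$ catches the infection externally at unit rate for each of its $\mathcal{L}$-neighbours that is already infected. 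Since an edge only ever transmits from an infected endpoint to a healthy one and exponential clocks are memoryless, the two descriptions induce the \emph{same} law for the infection-state trajectory $S(\cdot)$, hence the same finish time $T$.

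It remains to check that the induced policy $\pi$ -- a bona fide state-determined (and non-adversarial) external-infection policy -- satisfies the virulence hypothesis of Theorem~\ref{thm:gridconv}. Summing the rates above,
\[
  \|\bar L(t)\|_1 \;=\; \sum_{i\in V_n}\, \sum_{j\,:\,(i,j)\in\mathcal{L}} \mathds{1}_{\{S_j(t)=1\}}
  \;\le\; \sum_{(i,j)\in\mathcal{L}} \bigl( \mathds{1}_{\{S_i(t)=1\}} + \mathds{1}_{\{S_j(t)=1\}} \bigr)
  \;\le\; 2L(n)
\]
for every $t$, so we may take $L_{\max}(n) = 2L(n)$. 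Since $L(n) = O(n^{1-\epsilon})$, also $L_{\max}(n) = O(n^{1-\epsilon}) = o(n)$, so the standing assumptions of Theorem~\ref{thm:gridconv} hold.

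Finally, Theorem~\ref{thm:gridconv} lower bounds $\mathbb{E}[T]$ over \emph{all} external-infection policies (adversarial ones included), hence in particular for $\pi$: $\mathbb{E}[T] = \Omega\bigl((n/L_{\max}(n))^{1/(d+1)}\bigr) = \Omega\bigl((n/(2L(n)))^{1/(d+1)}\bigr)$, and the constant $2^{-1/(d+1)}$ is absorbed into $\Omega(\cdot)$, giving the claim. The only real content here is the reduction step -- identifying static links as a (highly restricted) instance of the external-agent framework, and noting that the $\ell_1$ budget they consume is at most twice their number; after that the statement is a direct citation. The point most worth double-checking is the equality-in-law of the coupling, specifically that nothing goes wrong when an $\mathcal{L}$-neighbour of $i$ becomes infected while $i$ is still healthy (so $i$'s external clock is already running); but this is exactly the behaviour captured by letting $L_i(t)$ depend on $S(t)$, so no stochastic-domination argument beyond Theorem~\ref{thm:gridconv} itself is needed.
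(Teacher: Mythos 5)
Your proposal is correct and is essentially the paper's own argument: the paper obtains this corollary by encoding the $L(n)$ static links as a state-dependent external-infection policy (exactly the encoding from Section~\ref{sec:model}) and then citing Theorem~\ref{thm:gridconv} with $L_{\max}(n)=\Theta(L(n))$. Your only additions are the explicit equality-in-law of the coupling and the factor-of-two bookkeeping in $\|\bar L(t)\|_1\le 2L(n)$, both of which the paper leaves implicit and neither of which changes the conclusion.
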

Note that by combining this with Theorem \ref{thm:GSI}, we can also get the same lower bound on the diameter $D(n)$ of the resultant graph. To see this, observe that by considering the entire graph as a single partition, Theorem \ref{thm:GSI} gives that the spreading time is $O\left(D(n)\right)$, and thus $D(n)=\Omega \left(\left(n/L(n)\right)^{\frac{1}{d+1}}\right)$ by Corollary \ref{corr:gridLB}. One consequence of this is in the context of `small-world graphs' \cite{marngu04:kleinberg,wattstro98:smallworld} wherein the diameter of a $d$-dimensional grid on $n$ nodes is reduced to $\Theta(\log n$) by adding $\Omega(n\log n)$ \emph{random} long-range edges. The usefulness of the above result is to show that it is not possible to obtain such sub-polynomial diameters by adding $O(n^{1-\epsilon})$ edges. 

We note also that this bound is tight. We can see this from the following simple example: partition the graph into $L(n)$ identical segments, and add an edge between a chosen vertex $i$ and a single vertex in each segment. Now for an epidemic starting at node $i$, it is easy to see that the resultant process is equivalent to the $2$-phase spreading process in the proof of Theorem \ref{thm:GSI} (i.e., parallel seeding of clusters followed by local spreading in clusters). Hence, the spreading time for this process is $O\left(\left(n/L(n)\right)^{\frac{1}{d+1}}\right)$. 

\noindent\textbf{Dynamic Links and Mobile Agents:} A more surprising result is obtained by considering spreading on a grid with additional \emph{dynamic} links, i.e., long-range links which can change their endpoints as time progresses. Unlike a static link which can transmit the infection only once (before both its endpoints are infected), such dynamic links can be re-used over time to help spread the infection. However, we now show that dynamic links do not in fact reduce the order of the spreading time.

\begin{corollary}
Let $G_n$ be a symmetric $n$-node $d$-dimensional grid graph, with $L=O(n^{1-\epsilon}), \epsilon\in (0,1]$ additional dynamic links. Then $\mathbb{E}[T] = \Omega \left(\left(\frac{n}{L}\right)^{\frac{1}{d+1}}\right)$.
\end{corollary}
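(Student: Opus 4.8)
The plan is to derive this exactly as the static-link corollary above was derived: realize the ``$d$-dimensional grid with $L$ dynamic long-range links'' process as a particular instance of the external-agent model of Section~\ref{sec:model}, verify that its long-range virulence is bounded by $\Theta(L)$ uniformly in time, and then quote Theorem~\ref{thm:gridconv}.

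First I would fix the coupling. Take the grid $G_n$ itself as the intrinsic graph (each grid edge carrying an i.i.d.\ rate-$1$ exponential clock), and regard the $L$ dynamic links as the sole source of external infection. At time $t$ the $\ell$-th dynamic link joins two vertices $a_\ell(t),b_\ell(t)$ --- whose identities may change over time, possibly as a function of the current infection state --- and, whenever exactly one of $a_\ell(t),b_\ell(t)$ is infected, transmits infection to the healthy endpoint at rate $\beta$, where $\beta$ is the fixed per-link transmission rate. By memorylessness of the exponential clocks, the instantaneous rate at which a healthy node $k$ becomes infected through these links is
\[ L_k(t) = \beta \cdot \#\{\ell : k \text{ is the unique healthy endpoint of link } \ell \text{ at time } t\}, \]
and $\bar L(t) = (L_k(t))_{k\in V}$ is a legitimate (time-varying, state-dependent) external infection-rate vector in the sense of Section~\ref{sec:model}. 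Hence the full process is an instance of epidemic spreading with external agents, and its finish time $T$ is exactly the quantity bounded in Theorem~\ref{thm:gridconv}.

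The crucial step is the uniform virulence bound. At any fixed $t$, each of the $L$ links contributes rate $\beta$ to at most one coordinate of $\bar L(t)$ (its healthy endpoint, when it has exactly one) and contributes nothing otherwise, so $\|\bar L(t)\|_1 \le \beta L$ for all $t$. The point worth stressing --- and the thing one might at first fear could go wrong --- is that the \emph{reusability} of a dynamic link, which lets a single link seed many fresh targets over the course of the epidemic (in contrast to a static link, which fires at most once), does not let it escape this instantaneous budget: Theorem~\ref{thm:gridconv} already takes its infimum over all omniscient, adversarial, arbitrarily time-varying external policies, so repeated re-seeding is automatically inside the class being lower-bounded. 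Thus only the instantaneous $\ell_1$ budget matters, and it is $\Theta(L)$.

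Finally I would set $L_{\max}(n) := \beta L$, which is $O(n^{1-\epsilon})$ since $\beta$ is constant, and apply the expectation part of Theorem~\ref{thm:gridconv} to conclude $\mathbb{E}[T] = \Omega\big((n/(\beta L))^{1/(d+1)}\big) = \Omega\big((n/L)^{1/(d+1)}\big)$, the constant $\beta$ being absorbed into the $\Omega(\cdot)$. The only non-routine point I anticipate is one of modeling hygiene: checking that whatever mechanism relocates the dynamic-link endpoints can be expressed as a process adapted to the natural filtration of $(S(t))_{t\ge0}$, so that $\bar L(t)$ is genuinely a valid policy and the exponential-clock reduction above is rigorous. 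This is a formality rather than a real obstacle; once it is granted, the corollary is immediate from Theorem~\ref{thm:gridconv}.
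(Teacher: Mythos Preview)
Your proposal is correct and follows exactly the approach the paper intends: it simply invokes Theorem~\ref{thm:gridconv} after observing that $L$ dynamic links constitute an external-infection policy with $\|\bar L(t)\|_1 \le \Theta(L)$ uniformly in $t$. The paper's own treatment is even more terse---it just states that ``we can again use Theorem~\ref{thm:gridconv}''---so your added detail about the coupling and the instantaneous-budget observation (that reusability of links is already covered by the adversarial infimum in Theorem~\ref{thm:gridconv}) is a welcome elaboration of what the paper leaves implicit.
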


A related model is of epidemic spreading via \emph{mobile agents}--in such a context, assuming $L(n)$ mobile agents, each with constant infection-rate, Theorem \ref{thm:gridconv} again gives the same converse for spreading time, i.e., $\Omega\left(\left(n/L(n)\right)^{\frac{1}{d+1}}\right)$ for $d$-dimensional grids. Furthermore, the techniques of Theorems \ref{thm:genach} and \ref{thm:GSI} can be used to give upper bounds for various models of mobility: for example, for $L$ mobiles moving randomly on a $d$-dimensional grid (where each mobile is unconstrained by the graph as to its next location), Theorem \ref{thm:genach} shows that the spreading time is $O\left(\left(n/L(n)\right)^{\frac{1}{d+1}}\right)$.

\noindent\textbf{Sub-Polynomial Spreading Time:} In the above examples, we consider settings where the spreading time is \emph{polynomial} in the graph size (i.e., $n^{\alpha}$ for some $\alpha\in(0,1]$). However our techniques do not yield tight bounds in the two extreme regions: non-spatially-constrained graphs, i.e., having sub-polynomial diameter, and high external-infection rate, i.e., $L(t)=\Omega(n)$. There is little work in literature in analyzing such regimes, and the existing work focuses on specific graph and infection models. Two notable results in this respect are tight bounds on deterministic spreading with adversarial external-infection in $d$-dimensional hypercubes \cite{AlonCube} (where the graph diameter is $\Theta(\log n)$), and the $\Theta(\log n)$-diameter characterization of small-world graphs \cite{marngu04:kleinberg} (where the number of edges added is $\Omega(n\log n)$); both however use techniques tailored to their specific problems. 

\noindent\textbf{Computational Complexity of Fast-Spreading Policies:} Another interesting set of questions arising from our model concerns the complexity of designing optimal external-infection policies for \emph{general graphs}. This is essentially a Markov Decision Process on the space of all subsets of $V$, and also seems connected to known NP-complete problems (see below). The design of optimal policies is beyond the scope of this work; however, our results indicate that in many relevant settings, simple policies have a good approximation ratio.

Note also that the GSI policy (Theorem \ref{thm:GSI}) takes as input a partition of the graph which balances the number of sets versus the maximum diameter among the sets. A natural question here is whether such a partition could be found easily -- we now briefly point out that this is NP-complete, but does admit a simple constant-factor approximation.

A related problem is one of choosing $k$ seed-nodes from $V$ so as to minimize the maximum distance of any node from one of these seeds. This is a special case of the \emph{$k$-center} problem (with shortest-path distances), which is known to be NP-hard \cite{Feder88}. However it can be easily approximated; in particular, a natural greedy algorithm is known to be $2$-approximate \cite{Feder88}. Our problem of finding a good partition is similar to the $k$-center problem, except that $k$ is now unknown. However, we can still use the $k$-center algorithms for this problem, as follows: Given the $2$-approximate algorithm for $k$-center, we can execute it for values of $k$ chosen sequentially from $\{1,2,4,\ldots,2^{\log n}\}$ -- we stop when the maximum diameter of the resulting partition is less than the current value of $k$. Since the diameter decreases with $k$, it is easy to show that the resulting partition is a $4$-approximation to our problem.

\section{Proofs: Upper Bounds for Specific Policies}

In this section, we formally prove the upper bounds on spreading time we stated in Section \ref{sec:resultsubnd}. We first prove Theorem \ref{thm:genach}, which gives an upper bound for the spreading time achieved by a random external-spreading policy. Essentially, Theorem \ref{thm:genach} says that given any partition of a large graph, the spreading time of an externally-aided epidemic process is determined by $(a)$ the time taken for the spread to start in each segment of the partition and $(b)$ the worst possible time taken by the intrinsic spread within each segment. The former can be estimated under random external-infection using a \emph{coupon-collector} argument, while the latter involves understanding intrinsic epidemic spreading on a graph (i.e., without external aid), using techniques from $(a)$ stochastic majorization and $(b)$ graph sparsification using \emph{shortest-path} spanning trees.

\begin{proof}[Proof of Theorem \ref{thm:genach}]
Under the random external-infection policy, we have that $L_i(t) \geq L_{\min}/n$ for all $i = 1, \ldots, n$. As before,  $(S(t))_{t \geq 0}$ denotes the infection state process. Given a partition $\Pi(G)=\bigcup_{i=1}^{g(\Pi)}G_i$ which divides the nodes into $g(\Pi)$ sets, recall that we define $s(\Pi)=\min_{i}|G_i|$ and $d(\Pi)=\min_{i}\{\mbox{diameter}(G_i)\}$, i.e., the smallest size and diameter of the subsets of the partition. Henceforth for ease of exposition, we suppress the dependence of variables $g,s,d$ on partition $\Pi$, and also use the shorthand $L_{\min}$ for $L_{\min}$.

Observe that each subgraph $G_i$ is prone to infection (i.e., some node in $G_i$ contracts infection) due to external sources with a rate $\geq \frac{L_{\min} \cdot s}{n}$. Now we consider an alternative infection-spreading process $(\tilde{S}(t))_{t \geq 0}$ which evolves in two phases:
\begin{itemize}
\item \emph{Phase-$1$:} Spreading occurs only due to external agents, and not internal epidemic spreading. The phase starts at $t = 0$ and ends when at least one node in each subgraph $G_i$ is infected. Let $T_1$ be the end time of this phase.
\item \emph{Phase-$2$:} Spreading occurs only due to intrinsic epidemic spreading in $G$, and not external sources. At $t=T_1$, for each $G_i$, only the \emph{first node} infected in phase-$1$, say $N_i$, is assumed to be infected, and all other nodes in $G_i$ are considered to be healthy. Finally, the process $\tilde{S}(\cdot)$ proceeds via the SI dynamics \emph{within} each $G_i$, \emph{i.e.} the infection \emph{does not spread} across edges that connect different subgraphs. Denote by $T_2$ the additional time taken (since $T_1$) for all nodes in all the $G_i$ to get infected.
\end{itemize}

Standard coupling arguments (e.g., see Theorem $8.4$ in \cite{DraiefMass}) establish that $\mathcal{N}(S(t))$
stochastically dominates $\mathcal{N}(\tilde{S}(t))$ for all $t$, i.e., $\tilde{S}$ is a 'slower' process than $S$. Thus, the spreading time for $\tilde{S}(\cdot)$ stochastically dominates that of $S(\cdot)$, \emph{i.e.}
\begin{align}
T  &\leq_{st} T_1 + T_2. \label{eqn:stdom}
\end{align}
It remains to estimate the means of $T_1$ and $T_2$, and their tail probabilities, to finish the proof. The analysis for $T_1$ follows a standard coupon-collecting argument: memorylessness of the exponential distribution implies that $T_1$ is stochastically dominated by the maximum of $g$ \emph{i.i.d.} exponential random variables, each with a rate at least $\frac{L_{\min}\cdot s}{n}$. Using a standard result for the expectation of the maximum of \emph{i.i.d.} exponentials, we get:
\begin{align}
\mathbb{E}[T_1] &\leq \frac{nH_{g}}{sL_{\min}}\leq \frac{n(\log g+1)}{sL_{\min}}
%=O\left(\frac{n\log g}{sL_{\min}}\right), 
\label{eqn:et1}
\end{align}
where $H_k \bydef \sum_{i=1}^k i^{-1} = O(\log k)$ is the $k$th
harmonic number. Also, by a union bound over the tails of $g$
\emph{i.i.d.} exponential random variables, for any $\kappa > 0$ we can estimate the tail of $T_1$:
\begin{align}
\mathbb{P}\left[T_1 \geq \kappa \frac{n\log g}{sL_{\min}}\right] &\leq ge^{-\left(\frac{sL_{\min}}{n}\cdot\frac{\kappa n\log g}{sL_{\min}}\right)}= g^{-\kappa +1}. \label{eqn:tailt1}
\end{align}

To estimate the statistics of $T_2$, we consider the following `slower' mode of spread for phase-$2$: for each subgraph $G_i$, let $W_i$ be a \emph{shortest-path spanning tree} of $G_i$ rooted at the node $N_i$ which is infected in phase-$1$. By our assumption, such a tree has diameter $\leq d$ and can, in principle, be obtained by performing a Breadth-First Search (BFS) on $G_i$ starting at $N_i$. If we now insist that the phase-$2$ static infection process in $G_i$ spreads \emph{only via the edges of} $W_i$, then again, a standard coupling can be used to show that the time $\hat{T}_2$ when all nodes in $G$ get infected thus stochastically dominates $T_2$.

Before proceeding, we need the following simple lemma, which we state without proof:
\begin{lemma}
\label{lem:maxsum}
For real numbers $a_{ij}$, $1 \leq i \leq m$, $1 \leq j \leq n$, $
\max_{i=1}^m \sum_{j=1}^n a_{ij} \leq \sum_{j=1}^n \max_{i=1}^m
a_{ij}$.
\end{lemma}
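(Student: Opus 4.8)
The plan is to reduce the inequality to a single application of the pointwise bound $a_{ij}\le\max_{1\le k\le m}a_{kj}$, after first passing to the row that realizes the left-hand side. Concretely, I would let $i^\star\in\{1,\dots,m\}$ be an index attaining $\max_{i}\sum_{j=1}^n a_{ij}$ (such an index exists, the maximum being over a finite set), so that $\max_{i=1}^m\sum_{j=1}^n a_{ij}=\sum_{j=1}^n a_{i^\star j}$ by definition of $i^\star$.

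Next, for each fixed column $j$ the number $a_{i^\star j}$ lies in the finite set $\{a_{1j},\dots,a_{mj}\}$, hence $a_{i^\star j}\le\max_{i=1}^m a_{ij}$. Summing this inequality over $j=1,\dots,n$ gives $\sum_{j=1}^n a_{i^\star j}\le\sum_{j=1}^n\max_{i=1}^m a_{ij}$, and chaining with the equality from the previous step yields the claim. An equivalent route, if one prefers, is induction on $n$ via subadditivity of the maximum, $\max_i(b_i+c_i)\le\max_i b_i+\max_i c_i$, whose $n=2$ base case is proved exactly as above.

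There is essentially no obstacle here; the only point requiring a word of care is that the maximizing index $i^\star$ must be fixed once, before summing over $j$, rather than being allowed to vary with $j$ (the latter would turn the statement into a triviality of the wrong kind). It is also worth recording, for the subsequent analysis, that the inequality can be highly non-tight: equality holds iff a single row simultaneously attains all the columnwise maxima, which is precisely why replacing $\max_i\sum_j$ by $\sum_j\max_i$ in the phase-$2$ bound is a genuine relaxation — and, as the theorem shows, an affordable one.
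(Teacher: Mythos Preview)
Your proof is correct: fixing the maximizing row $i^\star$ and then bounding each entry $a_{i^\star j}$ by the columnwise maximum is exactly the right argument, and your cautionary remark about not letting $i^\star$ vary with $j$ is well placed. The paper itself states this lemma without proof, so there is nothing to compare against; your writeup fills that gap cleanly.
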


Now for each tree $W_i$, suppose its leaves are labeled $N_{i1},\ldots,N_{i l(i)}$. Each leaf $N_{ij}$ has a unique path $p_{ij}$ starting from $N_i$ to itself, of length $\leq d$. Let $\hat{T}_{jk}$ be the time taken for the infection to spread across the $k$th edge on this path $p_{ij}$, \emph{i.e.} the (exponentially distributed) interval between the times when the $(k-1)$th node and the $k$th node on the path are infected. Then, the time $\hat{T}_{2,i}$ taken for all nodes in $W_i$ (hence $G_i$) to get infected can be upper-bounded by using Lemma \ref{lem:maxsum}:
\begin{align*}
\hat{T}_{2,i} &= \max_{j=1}^{l(i)} \sum_{k=1}^{|p_{ij}|} \hat{T}_{jk}
\leq \sum_{k=1}^{d} \left(\max_{j=1}^{l(i)} \hat{T}_{jk}\right),
\end{align*}
and a further application of the lemma bounds the phase-$2$ spreading time $\hat{T}_2 = \max_{i=1}^{g} \hat{T}_{2,i}$ as:
\begin{align*}
\hat{T}_2 &\leq \max_{i=1}^{g} \sum_{k=1}^{d}
\left(\max_{j=1}^{l(i)} \hat{T}_{jk}\right)\leq \sum_{k=1}^{d}
\left(\max_{i=1}^{g} \max_{j=1}^{l(i)} \hat{T}_{jk}\right).
\end{align*}
Note that the above inequalities are pointwise, i.e., they hold for every sample-path. The term in brackets is simply the maximum of the infection spread times across all stage-$k$ edges of all the trees $W_i$ within
$G$. Hence, it is stochastically bounded above by the maximum of $n$ i.i.d Exponential($1$) random variables (say $Z_1,\ldots,Z_n$), using which we can write:
\begin{align}
\label{eqn:et2}
\mathbb{E}[T_2] \leq \mathbb{E}[\hat{T}_2] \leq
\sum_{k=1}^{d} H_n\leq d\cdot(\log n+1).
\end{align}
Again, using the union bound to estimate the tail probability of
$T_2$, we have, for any $\kappa > 0$,
\begin{align}
\mathbb{P}[T_2 \geq \kappa d\log n] &\leq \mathbb{P}[\hat{T}_2 \geq \kappa d\log n] \nonumber \\
&\leq d\mathbb{P}[Z_1 \geq \kappa \log n] \nonumber \\
&\leq d\cdot e^{-\kappa \log n } \leq n^{-\kappa+1}. \label{eqn:tailt2} 
\end{align}

We now have all the required pieces. Combining (\ref{eqn:stdom}), (\ref{eqn:et1}) and (\ref{eqn:et2}) with the fact that $g\leq n$ proves the first part of the theorem. For the second part, recall we define $h=\max\left\{\frac{n}{sL_{\min}},d\right\}$. Now if $g\leq cn^{\delta}$ for some constant $c>0$, then (\ref{eqn:tailt1}) gives:
\begin{align*} 
\mathbb{P}[T_1 \geq \kappa h\log n] \leq c'n^{-\delta(\kappa +1)},
\end{align*}
where $c'=c^{-\kappa+1}$. From (\ref{eqn:stdom}) and (\ref{eqn:tailt2}), we get:
\begin{align*} 
\mathbb{P}[T\geq 2\kappa h\log n] &\leq \mathbb{P}[T_1 + T_2 \geq 2\kappa h\log n] \\
&\leq c'n^{-\delta(\kappa +1)} + n^{-\kappa+1} 
\leq 2c'n^{-\delta(\kappa -1)}
\end{align*}
Choosing $\kappa \geq \frac{\gamma}{\delta}+1$ yields the second part of the theorem.
\end{proof}

The factor of $\log n$ in the bound of the Theorem \ref{thm:genach} is in fact \emph{only due to} the `coupon-collector' effect phase-$1$ time $T_1$; a more refined analysis of the phase-$2$ time $T_2$ shows that if $d(\Pi) = \log n+\omega(1)$, \emph{i.e.} the minimum piece diameter is sufficiently large, then $T_2$ is order-wise $d(\Pi)$ in expectation and \emph{w.h.p.} This is the intuition behind the spreading time bound for the \emph{Greedy Subgraph Infection} policy: given the subgraphs $G_i$, they are infected through sequential greedy (as opposed to \emph{homogeneous}) external infection, i.e., $L(t)$ is concentrated on a single node $j(t)$ within any \emph{maximally healthy} subgraph at time $t$, i.e., one which minimizes $|G_i\cap\mathcal{S}(t)|$.
\begin{proof}[Proof of Theorem \ref{thm:GSI}]
Using the same notation as the earlier proof, suppressing dependence of variables on $\Pi$ and $n$. Again, consider the slower, two-phase spreading process, such that $T\leq_{st} T_1 + T_2$: in this case however, phase-$1$ consists of a sequential `seeding' of each subgraph (it is clear that this is stochastically dominated by the greedy subgraph infection). Thus $T_1$ now corresponds to the \emph{sum} of $g$ \emph{i.i.d} exponential random variables with parameter $L_{\min}$ (i.e., there is no coupon-collector effect), and thus, via standard results, concentrates around its mean $\frac{g}{L_{\min}}$. To complete the proof, we need to tighten our previous bound for $\hat{T}_2$ (and hence, $T_2$), which, using our previous notation, can be written as:
\begin{align*}
\hat{T}_2 = \max_{i=1}^{g} \hat{T}_{2,i} = \max_{i=1}^{g} \max_{j=1}^{l(i)} \sum_{k=1}^{|p_{ij}|} \hat{T}_{jk},
\end{align*}
\emph{i.e.}, $\hat{T}_2$ is the maximum sum of infection times over all leaves in all trees $W_i$. Since the total number of leaves in all the trees $W_i$ is at most $n$, a union bound yields, for any $\alpha >0$:
\begin{align*}
\mathbb{P}[\hat{T}_2 > \alpha d] \leq n \mathbb{P}\left[\sum_{i=1}^{d} Z_i > \alpha d\right],
\end{align*}
where all the $Z_i$ are independent Exponential($1$) random
variables. A Chernoff bound yields:
\begin{align*}
\mathbb{P}\left[\sum_{i=1}^{d} Z_i > \alpha d\right] &\leq e^{-\psi \alpha d} \mathbb{E}\left[e^{\psi \sum_{i=1}^{d} Z_i}\right] 
%&= e^{-\psi \alpha d(n)} \left(\mathbb{E}\left[e^{\psi Z_1}\right]\right)^{d(n)} \\
= e^{-\psi \alpha d} (1-\psi)^{-d}
\end{align*}
where $0 \leq \psi < 1$. With $\psi = 1/2$ and any $\alpha > 0$, we have:
\begin{align*}
\mathbb{P}[\hat{T}_2 > \alpha d] \leq n \cdot 2^{d} e^{-\frac{\alpha d}{2}}.
\end{align*}
Finally, for estimating $\mathbb{E}[\hat{T}_2]$ we have,
\begin{align*}
\mathbb{E}[\hat{T}_2]&=\int_0^\infty \mathbb{P}[\hat{T}_2 > x] dx \\
%  &= \int_0^{(2\log 2 + 2)d(n)} \mathbb{P}[\hat{T}_2 > x]dx \\
%  &\quad + \int_{(2 \log 2 + 2)d(n)}^\infty \mathbb{P}[\hat{T}_2 > x]dx\\
&\leq (2\log 2 + 2)d+ d\int_{2\log 2 + 2}^\infty \mathbb{P}[\hat{T}_2 > \alpha d] d\alpha.\\ 
&\leq 3d(n)+2^{d}nd\int_{2\log 2 + 2}^\infty e^{-\frac{\alpha d}{2}} d\alpha
= 3d + 2ne^{-d},
\end{align*}
and since we have that $d\geq\log n$, we get the result.
\end{proof}
%We conclude this section with a proof of Theorem \ref{thm:condach}, which gives a conductance-based upper bound on the spreading time with random external-agents.
\begin{proof}[Proof of Theorem \ref{thm:condach}]
As in Theorem \ref{thm:genach}, we study an associated two-phase spreading process $(\tilde{S}(t))_{t \geq 0}$, where the first phase takes time $T_1$ to infect at least one node in each $G_i$, and the infection takes a further time $T_2$ to spread within every (connected) $G_i$. Via coupling, we have $T_{\pi_r}\leq_{st} T_1 + T_2$.

As before, $T_1$ is distributed as the maximum of $g$ Exponential random variables, each with rate at least $\frac{s_{\min}L_{\min}}{n}$; thus, for $\kappa>0$, using standard bounds, we have:
\begin{align}
\mathbb{E}[T_1]&\leq\frac{nH_g}{s_{\min}L_{\min}}\leq \frac{n(\log g+1)}{s_{\min}L_{\min}}, \label{eqn:ET1}
\end{align}
and also, for the variance, we have:
\begin{align}
\mbox{Var}[T_1] &\leq \frac{n^2}{s_{\min}^2L_{\min}^2} \sum_{i=1}^{g} \frac{1}{i^2}\leq \frac{\pi^2n^2}{6s_{\min}^2L_{\min}^2}. \label{eqn:VT1}
\end{align}
Next we have that $T_2$ is the maximum of the times $T_{2,i}$ for infection to spread in each subgraph $G_i$. We stochastically dominate each $T_{2,i}$ as follows: for each subgraph $G_i$, consider a continuous time Markov chain $(\hat{Z}_i)_{t \geq 0}$ on the state space $1,\ldots, |V(G_i)|$ with $\hat{Z}_i(0) = 1$ and transitions $j \rightarrow j + 1$ at rate $j\Psi(\Pi)$ (henceforth denoted $\Psi$) if $1 \leq j \leq |V(G_i)|/2$, and at rate $(|V(G_i)| - j)\Psi$ if $|V(G_i)|/2 < j \leq |V(G_i)|-1$. Let $\hat{T}_{2,i}$ be the time taken for the Markov chain $\hat{Z}_i$ to hit its final state $|V(G_i)|$; $\hat{T}_{2,i} = \sum_{j=1}^{|V(G_i)|-1} \hat{T}_{2,i,j}$ where $\hat{T}_{2,i,j}$ is the sojourn time of $\hat{Z}_i$ in state $j$. We claim that $\hat{T}_{2,i}$ stochastically dominates $T_{2,i}$. To see this, note that at any time $t$, if the number of infected nodes in the phase-$2$ spreading process in $G_i$ is $ 1 \leq j \leq |V(G_i)|/2$, then by the definition of conductance, the rate at which a new healthy node in $G_i$ is infected is at least $j\Psi$. Similarly, if the number of infected nodes is $|V(G_i)|/2 < j < |V(G_i)|$ (\emph{i.e.}  the number of healthy nodes is $(|V(G_i)| - j)$), then the rate at which a new healthy node is infected is at least $(|V(G_i)|/2 -j)\Psi$. 
%Thus, in all cases, the rate at which the number of infected nodes in $G_i$ increases at least the rate at which $\hat{Z}_i$ jumps forward from the same number of infected nodes. 
By standard Markov chain coupling arguments (Theorem $8.4$ of \cite{DraiefMass}), we have that $T_{2,i}\leq_{st}\hat{T}_{2,i}$.

By the independence of the original phase-$2$ spreading processes within the $G_i$ for all $i = 1,\dots,g$, we have:
\begin{align*}
T_2 = \max_i T_{2,i} &\leq_{st} \max_i \hat{T}_{2,i} =  \max_i \sum_{j=1}^{|V(G_i)|-1} \hat{T}_{2,i,j}\\
&\leq \sum_{j=1}^{|V(G_i)|-1} \max_i \hat{T}_{2,i,j} 
\end{align*}
Hence we have:
\begin{align}
\mathbb{E}[T_2] &\leq \sum_{j=1}^{|V(G_i)|-1} \mathbb{E}\left[\max_i \hat{T}_{2,i,j}\right] = 2\sum_{j=1}^{|V(G_i)|/2} \frac{\log g}{j\Psi} \nonumber \\
&\leq \frac{2\log s_{\max} \log g}{\Psi}
%\quad \left( \; \because |V(G_i)| = O(s(n))\right). 
\label{eqn:boundET2}
\end{align}
And similarly, for the variance, we have: 
\begin{align}
\mbox{Var}\left(\sum_{j=1}^{|V(G_i)|-1} \max_i
\hat{T}_{2,i,j}\right) &= \sum_{j=1}^{|V(G_i)|-1} \mbox{Var}\left(\max_i
\hat{T}_{2,i,j}\right) \nonumber \\
&= 2\sum_{j=1}^{|V(G_i)|/2}\frac{\pi^2}{6j^2 \Psi^2} = \frac{\pi^4}{18\Psi^2}. \label{eqn:boundVT2}
\end{align}
Combining (\ref{eqn:ET1}) and (\ref{eqn:boundET2}) gives the first part of the theorem. Further, recalling $k(\Pi) \equiv \max\left(\frac{n}{sL_{\min}},\frac{\log s}{\Psi}\right)$, we have:
\begin{align*}
&\mathbb{P}[T\geq\kappa k\log g]\leq\mathbb{P}[T_1 + T_2 \geq \kappa k\log g]\\ &\leq\mathbb{P}\left[T_1 + T_2 \geq \frac{\kappa}{2}\left(\frac{n}{s_{\min}L_{\min}}+\frac{\log s_{\max}}{\Psi}\right)\log g\right]
\end{align*}
Now, using the variance estimates (\ref{eqn:VT1}) and (\ref{eqn:boundVT2}) with Chebyshev's inequality, we have for any $\kappa > 0$:
\begin{align*}
&\mathbb{P}[T\geq \kappa k\log g]
%   &\leq \mathbb{P}\left[T_1 + \sum_{j=1}^{|V(G_i)|-1} \max_i \hat{T}_{2,i,j} \geq \frac{\kappa}{2} \left( g(n) + \frac{\log s(n)}{\Psi(n)}\right)\log g(n)\right] \\
\leq\frac{4\mbox{Var}\left(T_1 + \sum_{j=1}^{|V(G_i)|-1} \max_i \hat{T}_{2,i,j}\right)}{\kappa^2\log^2 g \left(\frac{n}{s_{\min}L_{\min}}+\frac{\log s_{\max}}{\Psi}\right)^2} \\
&\leq \frac{\pi^2\left(\frac{n^2}{s_{\min}^2L_{\min}^2}+ \frac{1}{\Psi^2}\right)}{9\kappa^2\log^2 g\left(\frac{n}{s_{\min}L_{\min}}+\frac{\log s_{\max}}{\Psi}\right)^2}\leq \frac{\pi^2}{9\kappa^2(\log g)^2},
\end{align*}
since $\log s_{\max}\geq 1$. This completes the proof.
\end{proof}

\section{Proofs: Lower Bounds for Specific Graphs}

In the previous section, we upper bound the time taken by random/greedy external-infection policies to infect all nodes in a network. Next, we derive corresponding lower bounds for certain commonly studied \emph{spatially limited} networks, in particular, line/ring networks,
$d$-dimensional grids and random geometric graphs. As discussed in Section \ref{sec:resultslbnd}, for each of these classes of graphs, we establish lower bounds on the spreading time of \emph{any} spreading strategy (possibly omniscient and adversarial) that match the upper bounds (upto logarithmic factors for random spread, and exactly for the GSI policy).

\subsection{Ring/Linear Graphs}

As before, for each $n$ we define $G_n = (V_n,E_n)$ to be the ring graph with $n$ contiguous nodes $V_n \bydef \{v_1,\ldots,v_n\}$, $E_n \bydef \{(v_i,v_j): j-i \equiv 1 \mbox{ (mod $n$)}\}$. Partitioning $G_n$ into $\sqrt{nL_{\min}}$ successive segments of length $\sqrt{n/L_{\min}}$, we get (from Theorem \ref{thm:genach}) that {\em the spreading time on an $n$-ring using random external-infection, is $O(\sqrt{n/L_{\min}}\log n)$ in expectation and with high probability} (see Corollary \ref{corr:ringach}).

We now prove that the spreading time on a grid or line graph with \emph{any} (possibly infection-state aware) external-infection spread strategy must be $\Omega(\sqrt{n})$, both in expectation and with high probability. This establishes that for ring graphs (or $1$-dimensional grids), random external-infection is as good as any other form of controlled infection in an order-wise sense, up to logarithmic factors. Furthermore, we use this theorem to introduce a general technique for obtaining lower bounds based on stochastic dominance via a parallel cluster-growing process. For ease of notation, we assume $||\bar{L}(t)||_1 \leq 1$ in this proof -- in the next section, we obtain the more general bound (with dependence on $L_{\max}(n)$) for $d$-dimensional grids.
%\begin{theorem*}[Theorem \ref{thm:ring_lower}: Lower bound for ring graphs]
%%\label{thm:ring_lower}
%For the ring graph $G_n$ with $n$ nodes, and given that $||\bar{L}(t)||_1 \leq 1\,\forall t\geq 0$, there exists $c > 0$ independent of $n$ such that for any spreading policy $\pi$,
%\[ \mathbb{P}\left[T_\pi < c \sqrt{n} \right] = O\left(e^{- \Theta(1)\sqrt{n}}\right). \]
%Moreover, we have
%\[ \inf_{\pi \in \Pi} \mathbb{E}[T_\pi] = \Omega\left(\sqrt{n}\right).\]
%\end{theorem*}
\begin{proof}[Proof of Theorem \ref{thm:ring_lower}]
To keep the proof general, we use a parameter $\beta$ for the intrinsic spreading rate over an edge (assumed to be $1$ earlier). Along with the spreading process $(S^{\mathcal{P}}(t))_{t \geq 0}$ induced by the policy $\mathcal{P}$, consider a random process $(\tilde{S}(t))_{t \geq 0}$ described as follows:
\begin{enumerate}
\item At all times $t$, $\tilde{S}(t)$ consists of an integer number ($\tilde{C}_t$) of sets of points called \emph{clusters}, where $(\tilde{C}_t)_{t \geq 0}$ is a Poisson process with intensity $L_{\max}=1$, and $\tilde{C}_0 = 1$ (i.e., an `initial' cluster in which intrinsic spreading starts).
\item Once a new cluster is formed at some time $s$, it adds points following a Poisson process of intensity $2\beta$. % (where $\beta$ is the intrinsic spreading rate for an edge in the graph).
\end{enumerate}

\begin{figure}[b]
  \centering \includegraphics[scale=0.25]{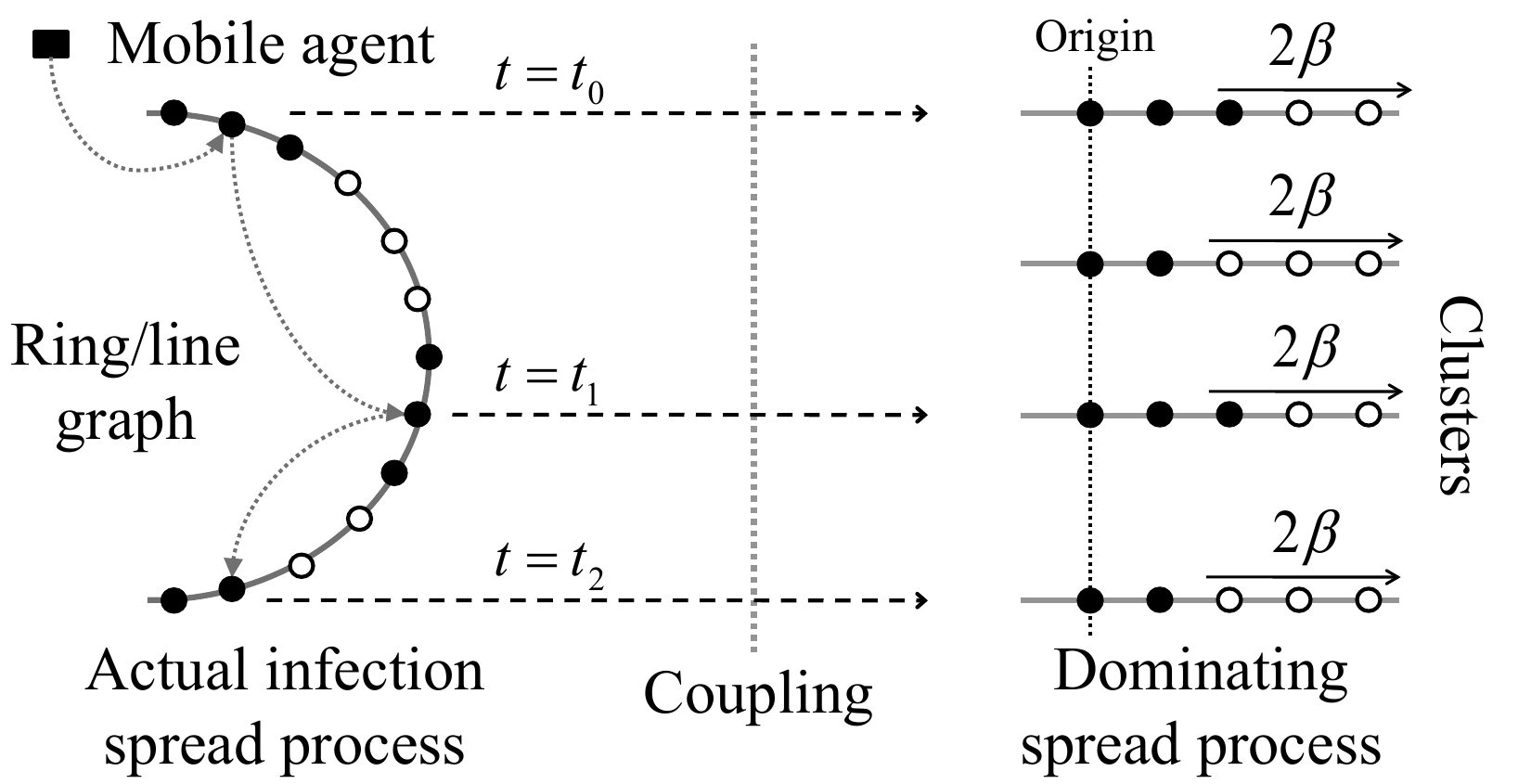}
  \caption{Dominating the infection spread using independently
    growing clusters}
  \label{fig:clusterline}
\end{figure}

Via a coupling argument, it can be shown that for \emph{any} spreading policy $\mathcal{P}$, at all times $t \geq 0$, the total number of points in $\tilde{S}(t)$ (denoted by $\tilde{N}_t$) stochastically dominates that in $S^{\mathcal{P}}(t)$. Informally, this is due to two reasons: first, that the rate of `seeding' of new clusters by $\mathcal{P}$ is at most as fast as that in $\tilde{S}(\cdot)$; secondly, each cluster in $\tilde{S}(\cdot)$ grows independently and without interference from other existing clusters, as opposed to clusters that could `merge' in the process $S^{\mathcal{P}}(\cdot)$. Fig. \ref{fig:clusterline} depicts the structure of the dominating process $\tilde{S}(\cdot)$.

Let $\tilde{T} \bydef \inf \{t \geq 0: \tilde{N}_t = n \}$ be the time when the number of points in $\tilde{S}(\cdot)$ first hits $n$. Owing to the stochastic dominance $\mathcal{N}(S^\pi(t)) \leq_{st}\tilde{N}_t$, we have that for any policy $\mathcal{P}$:
\begin{align}
\tilde{T} &\leq_{st} T_{\mathcal{P}}. \label{eqn:stdomline}
\end{align}  

\noindent Knowing the way $\tilde{S}(\cdot)$ evolves, we can calculate
$E[\tilde{N}_t]$:
\begin{align*}
\mathbb{E}[\tilde{N}_t]=\mathbb{E}[\mathbb{E}[\tilde{N}_t|\tilde{C}_t]]&=\sum_{k=0}^{\infty}\mathbb{P}(\tilde{C}_t=k)\mathbb{E}[\tilde{N}_t|\tilde{C}_t=k]\\
&=\sum_{k=0}^{\infty}\frac{e^{-t}t^k}{k!}\mathbb{E}[\tilde{N}_t|\tilde{C}_t=k].
\end{align*}
Since $\tilde{C}_t$ is a Poisson process, conditioned on
$\{\tilde{C}_t=k\}$, the $k$ cluster-creation instants are distributed uniformly on $[0,t]$. Let the times of these arrivals be $\tilde{T}_1,\ldots,\tilde{T}_{k}$; then $[\tilde{T}_i,t]$ is the time for which the $i$th cluster has been growing. Since every cluster grows at a rate of $2\beta$, conditioned on $\{\tilde{C}_t=k\}$, the expected size of the $i$th cluster is $2\beta(t-\tilde{T}_i)$, $1\leq i\leq k$. Also, the expected size of the `$0$-th' cluster at time $t$ is $2\beta t$. Using $\mathbb{E}[\tilde{T}_i|\tilde{C}_t=k]=t/2$, we obtain:
\begin{align*}
\mathbb{E}[\tilde{N}_t|\tilde{C}_t=k] &= 2\beta t+\sum_{i=1}^{k}\mathbb{E}[2\beta (t-\tilde{T}_i)\arrowvert
\tilde{C}_t=k]\\
&=\beta(k+2)t \\
\Rightarrow \mathbb{E}[\tilde{N}_t] &= \sum_{k=0}^{\infty}\frac{e^{-t}t^k}{k!}\mathbb{E}[\tilde{N}_t|\tilde{C}_t=k]\\
&=\sum_{k=0}^{\infty}\dfrac{e^{-t}t^k}{k!}\beta(k+2)t=\beta t^2+2\beta t.
\end{align*}
Hence, using Markov's inequality, we have:
\begin{align*}
P(\tilde{T}>t) &= P(\tilde{N}_t<n)=1-P(\tilde{N}_t\geq n) \\
&\geq 1-\dfrac{\mathbb{E}[\tilde{N}_t]}{n}\geq 1-\dfrac{\beta (t+1)^2}{n}\\
\Rightarrow\,\,\mathbb{E}[\tilde{T}] &= \int_{0}^{\infty}\mathbb{P}(\tilde{T}>x)dx\geq
\int_{0}^{\sqrt\frac{n}{\beta}-1}\mathbb{P}(\tilde{T}>x)dx \\
&\geq \int_{0}^{\sqrt\frac{n}{\beta}-1}\left(1-\dfrac{\beta
(x+1)^2}{n}\right)dx\\
&=\frac{2}{3}\sqrt \frac{n}{\beta}
-1+\frac{\beta^2}{3n^2}.
\end{align*}
From (\ref{eqn:stdomline}), we have for any policy $\mathcal{P}$, and large enough $n$:
\begin{align*}
\mathbb{E}[T_{\mathcal{P}}]\geq\frac{2}{3\sqrt{\beta}}\sqrt{n} .
\end{align*}

For the second part, denoting the size of the $i^{th}$-created cluster at time $s\geq T_i$ by $\tilde{X}_i(s)$, we can write:
\begin{align*}
&\left(\bigcap_{i=0}^{2et} \{\tilde{X}_i(t+T_i) < 4e\beta
t\}\right) \bigcap \{\tilde{C}_t < 2et \}\\
&\subseteq\left(\bigcap_{i=0}^{\tilde{C}(t)}\{\tilde{X}_i(t+T_i)<4e\beta
t\}\right) \bigcap \{\tilde{C}_t < 2et \} \\
&\subseteq \left(\bigcap_{i=0}^{\tilde{C}(t)} \{\tilde{X}_i(t) < 4e\beta
t\}\right) \bigcap \{\tilde{C}_t < 2et \} \\
&\subseteq \{\tilde{N}_t < 8\beta e^2 t^2 \}.
\end{align*}
Here, the sets refer to sample-trajectories (i.e., points in the underlying sample space) satisfying the stated conditions. Applying a standard Chernoff bound ($\mathbb{P}[Y \geq 2e\lambda] \leq
(2e)^{-\lambda}$ for $Y \sim$ Poisson($\lambda$)) to $\tilde{C}_t
\sim$ Poisson($t$) and $\tilde{X}_i(t+T_i) \sim$ Poisson($2\beta t$) above, we can write:
\begin{align*}
\mathbb{P}[\tilde{N}_t \geq 8\beta e^2 t^2] &\leq
\mathbb{P}[\tilde{C}_t \geq 2et] +
\sum_{i=1}^{2et}\mathbb{P}[\tilde{X}_i(t+T_i) \geq 4e\beta t] \\
&\leq (2e)^{-t} + 2et \cdot (2e)^{-2\beta t} %= O(e^{-t (1 \wedge 2\beta)}).
\end{align*}
Using the stochastic dominance (\ref{eqn:stdomline}), if $\beta\geq 1$:
\begin{align*}
\mathbb{P}\left[T < \sqrt{\frac{n}{8\beta e^2}}\right] &\leq \mathbb{P}\left[\tilde{T} < \sqrt{\frac{n}{8\beta e^2}}\right]\\
&= \mathbb{P}\left[\tilde{N}_{\sqrt{\frac{n}{8\beta e^2}}} > n\right] 
\leq 4e^{- \sqrt{\frac{n}{8\beta e^2}}}.
\end{align*}
Finally, note that $e^2\leq 8$. This completes the proof.
\end{proof}

\subsection{$d$-Dimensional Grid Graphs}
\label{sec:gridlbnd}

Next, we show that random external-infection spreading achieves the order-wise optimal spreading time (up to logarithmic factors) on $d$-dimensional grid networks for $d \geq 2$, denoted by $G_n=(V_n,E_n)$, with $V_n \bydef \{1,2,\ldots,n^{1/d}\}^d$ and $E_n \bydef \{(x,y) \in V_n \times V_n: ||x-y||_1 = 1\}$.

Consider a partition of $G_n$ into $(nL_{\min}^d)^{1/(d+1)}$ identical and contiguous `sub-grids' $G_{n,i}$, $i = 1,\ldots, (nL_{\min}^d)^{1/(d+1)}$. By this, we mean that each $G_{n,i}$ is induced by a copy of $\{1,2,\ldots,(n/L_{\min})^{1/(d+1)}\}^d$ (and thus has $(n/L_{\min})^{d/(d+1)}$ nodes). For instance, in the case of a planar $\sqrt{n} \times \sqrt{n}$ grid (with $L_{\min}=1$), imagine tiling it horizontally and vertically with $\sqrt[3]{n}$ identical $\sqrt[3]{n} \times \sqrt[3]{n}$ sub-grids (Fig. \ref{fig:2dgrid}). With such a partition, an application of Theorem \ref{thm:genach} (see Corollary \ref{corr:gridach}) shows that {\em the spreading time with random external-infection on a $d$-dimensional $n$-node grid is $O\left(\left(\frac{n}{L_{\min}}\right)^{1/(d+1)} \log n\right)$} in expectation and with high probability.

\begin{figure}[!b]
\centering \includegraphics[scale=0.24]{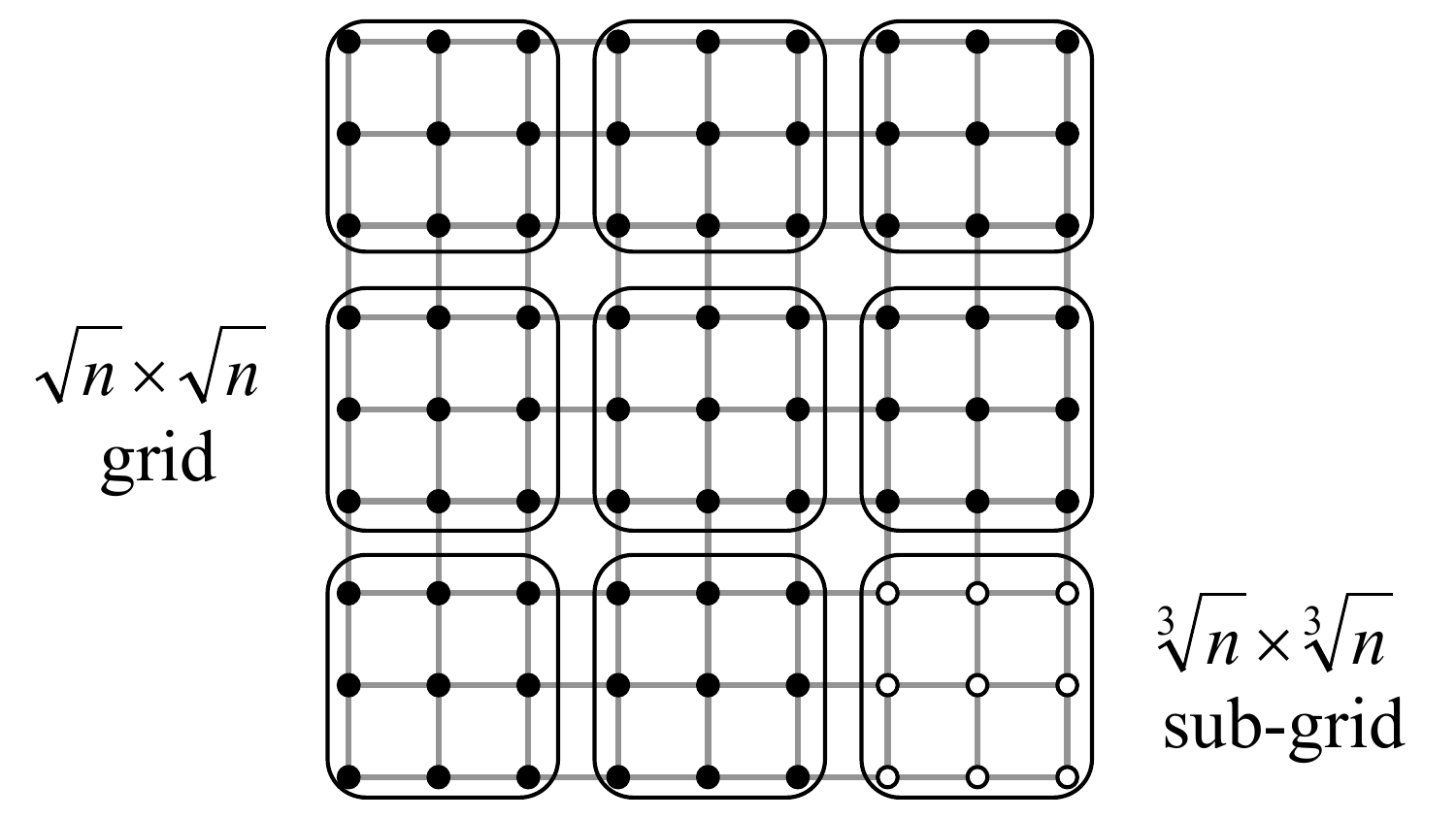}
  \caption{Partitioning a planar grid into sub-grids}
  \label{fig:2dgrid}
\end{figure}

We now show that \emph{any} external-infection spreading policy on a grid must take time $\Omega\left(\left(\frac{n}{L_{\max}}\right)^{1/(d+1)}\right)$ for spreading to all nodes w.h.p., and consequently also in expectation. Barring a logarithmic factor, this shows that such a random policy is as good as any other (possibly omniscient) policy for grids. In order to derive this lower bound, we first need the following lemma from the theory of first-passage percolation \cite{kesten93:speed}, which lets us control the extent to which infection on an infinite grid has spread at time $t$:

\begin{lemma}
\label{lem:gridpercolation}
Let $(\tilde{Z}(t))_{t \geq 0} \in \{0,1\}^{\mathbb{Z}^d}$ represent a static/basic infection spread process on the infinite $d$-dimensional lattice $\mathbb{Z}^d$ starting at node $(0,0,\ldots,0)$ at time $0$. Then, there exist positive constants $l, c_3, c_4$ such that for $t\geq 1$:
$$\mathbb{P}[\mathcal{N}(\tilde{Z}(t)) > t^d l^d] \leq c_1 t^{2d}e^{-c_2\sqrt{t}}. $$
\end{lemma}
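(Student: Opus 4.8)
The plan is to recognize the basic-contact process on $\mathbb{Z}^d$ as a first-passage percolation (FPP) model, perform a short geometric reduction from ``many infected sites'' to ``some far-away infected site'', and then control the latter by the ``infection cannot spread too fast'' large-deviation bound of Kesten. First I would note that, by memorylessness of the exponential clocks, $(\tilde Z(t))_{t\ge0}$ is exactly a growing FPP ball: assign to each edge $e$ of $\mathbb{Z}^d$ an independent $\mathrm{Exp}(1)$ passage time $\omega_e$, set $T(0,x)\bydef\min_{\gamma:0\to x}\sum_{e\in\gamma}\omega_e$, and then the set of infected nodes at time $t$ is $B(t)\bydef\{x\in\mathbb{Z}^d:T(0,x)\le t\}$, so $\mathcal{N}(\tilde Z(t))=|B(t)|$. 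The set $B(t)$ is connected and contains the origin; a minimum-weight path realizing $T(0,v)$ may be taken to be self-avoiding (standard in FPP, since a.s. only finitely many self-avoiding paths have weight below any threshold).

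Next comes the geometric step. A connected subset of $\mathbb{Z}^d$ containing the origin with more than $(2m+1)^d$ vertices cannot lie inside the $\ell_\infty$-ball $\{x:\|x\|_\infty\le m\}$ (which has $(2m+1)^d$ lattice points), hence contains a vertex at $\ell_\infty$-distance at least $m+1$ from the origin. Taking $m=\lfloor(lt-1)/2\rfloor$ yields, for a suitable $\rho=\rho(l,d)=\Theta(l)$ and all $t\ge1$,
\[\{\mathcal{N}(\tilde Z(t))>(lt)^d\}\ \subseteq\ \{\exists\,v:\ \|v\|_\infty\ge\rho t,\ T(0,v)\le t\}\ \subseteq\ \bigcup_{k\ge\rho t}A_k,\]
where $A_k$ is the event that some self-avoiding path from the origin with exactly $k$ edges has total passage time at most $t$ (the last inclusion uses that a minimum-weight path to such a $v$ is self-avoiding and has at least $\|v\|_1\ge\|v\|_\infty\ge\rho t$ edges). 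The point of this reduction is that the free parameter $l$, and hence $\rho$ and the eventual exponential rate, can be taken as large as we like.

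Finally I would bound $\mathbb{P}[A_k]$: there are at most $(2d)^k$ self-avoiding paths of length $k$ from the origin, and along each the total passage time is $\mathrm{Gamma}(k,1)$, with $\mathbb{P}[\mathrm{Gamma}(k,1)\le t]\le t^k/k!$; a union bound gives $\mathbb{P}[A_k]\le(2dt)^k/k!$. Summing over $k\ge\rho t$ and using $k!\ge(k/e)^k$, each term is at most $(2dte/k)^k\le(2de/\rho)^k\le2^{-k}$ once $\rho\ge4de$, so $\mathbb{P}[\mathcal{N}(\tilde Z(t))>(lt)^d]\le\sum_{k\ge\rho t}2^{-k}=O(e^{-(\rho\ln2)t})$. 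Choosing $l$ (hence $\rho$) large enough and using $e^{-\Theta(t)}\le c_1t^{2d}e^{-c_2\sqrt t}$ for $t\ge1$ gives the stated bound, with room to spare. Alternatively one may invoke Kesten's estimate directly: for $\rho$ large one has $t\le\epsilon\|v\|_1$ with $\epsilon$ small, so $\mathbb{P}[T(0,v)\le t]\le Ce^{-c\|v\|_1}\le Ce^{-c\rho t}$, and one unions over the $O(t^{d-1})$ vertices $w$ with $\|w\|_\infty=\lceil\rho t\rceil$ (through one of which the cluster must pass to escape the box).

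The Gamma-tail estimate and Stirling bookkeeping are routine; the one place to be careful is the geometric reduction — that a large connected cluster must reach far in $\ell_\infty$, and that a minimum-weight path may be assumed self-avoiding so that ``escaping the box'' genuinely forces a long path — together with verifying that $l$ is a genuinely free parameter so the exponential rate can be pushed past any fixed threshold. I also remark that the exponent $\sqrt t$ and prefactor $t^{2d}$ in the statement are not tight (the argument above yields $e^{-\Theta(t)}$), but the weaker form suffices for the downstream application.
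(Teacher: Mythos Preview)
Your argument is correct and in fact yields a stronger conclusion than stated, but it takes a genuinely different route from the paper. The paper's proof is a two-line black-box application of Kesten's large-deviation shape theorem (Theorem~2 in \cite{kesten93:speed}): there is a fixed cube $B_0=[-l/2,l/2]^d$ with $\mathbb{P}[\tilde B(t)\subset tB_0]\ge 1-c_1t^{2d}e^{-c_2\sqrt t}$, and containment in $tB_0$ immediately forces $|\tilde B(t)|\le (lt)^d$. By contrast, you give a self-contained derivation: reduce ``many infected sites'' to ``some infected site far from the origin'' via a pigeonhole on the $\ell_\infty$-box, then bound the latter by a direct union over self-avoiding paths combined with the Gamma lower-tail estimate $\mathbb{P}[\mathrm{Gamma}(k,1)\le t]\le t^k/k!$.

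The trade-off is this: Kesten's theorem applies to general passage-time distributions (under moment conditions), which is why its bound carries the weaker $e^{-c\sqrt t}$ tail and the $t^{2d}$ prefactor; the paper simply inherits those. Your argument exploits the specific exponential edge clocks --- light tails, explicit moment generating function --- and thereby obtains $e^{-\Theta(t)}$, which is sharper and, as you note, makes the statement's form slack. Your approach is more elementary (no external citation needed) and tighter for this model; the paper's is shorter and leans on the FPP literature. Either suffices for the downstream use in Theorem~\ref{thm:gridconv}, where only a summable-in-$n$ tail is required.
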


\begin{proof}
Let
\[\tilde{B}(t) \bydef \{v \in \mathbb{Z}^d: \tilde{Z}_v(t) = 1\}  \subset \mathbb{Z}^d \; (\subset \mathbb{R}^d)\]
be the set of infected nodes at time $t$ in $\tilde{Z}$. We use the following version of a result, from percolation on lattices with exponentially distributed edge passage times, about the `typical shape' of $\tilde{B}(t)$ \cite{kesten93:speed}:

\noindent{\em (Theorem 2 in \cite{kesten93:speed})} { \em There exists a fixed (\emph{i.e.} not depending on $t$) cube $B_0 =
  \left[-\frac{l}{2},\frac{l}{2}\right]^d \subset \mathbb{R}^d$, and constants $c_1, c_2 > 0$, such that for $t \geq 1$,}
\begin{align} 
  \label{eqn:shapethm}
  \mathbb{P}\left[\tilde{B}(t) \subset tB_0\right] \geq 1 - c_1 t^{2d}e^{-c_2\sqrt{t}}.
\end{align}
It follows from (\ref{eqn:shapethm}) that for $t \geq 1$,
\begin{align*}
\mathbb{P}[\mathcal{N}(\tilde{Z}(t)) > t^d l^d] &= \mathbb{P}[|\tilde{B}(t)| > t^d l^d] \\
&\leq \mathbb{P}[\tilde{B}(t)\nsubseteq tB_0]
\leq c_1 t^{2d}e^{-c_2\sqrt{t}}.
\end{align*}
\end{proof}

Lemma \ref{lem:gridpercolation} allows us to control the growth of individual infected clusters; this is analogous to the dominating spread process (growing at rate $2\beta$) for line graphs. Using this, we now obtain a lower bound on the spreading-time.
\begin{proof}[Proof of Theorem \ref{thm:gridconv}]
We introduce a (dominating) counting process $(\tilde{S}(t))_{t \geq 0}$ (Fig. \ref{fig:griddom}), as follows:
\begin{itemize}
\item $\forall\,t\geq 0$, $\tilde{S}(t)$ consists of an integer number ($\tilde{C}_t$) of \emph{clusters}, where $(\tilde{C}_t)_{t \geq 0}$ is a Poisson process with intensity $L_{\max}(n)$, and $\tilde{C}_0 = 1$ (i.e., an `initial' infected node).
\item Each cluster grows as an independent copy of the intrinsic spreading process on an exclusive infinite $d$-dimensional grid $\mathbb{Z}^d$ starting at $(0,0,\ldots,0)$.
\end{itemize}
Note that in the process $\tilde{S}$, the growth of each cluster follows the intrinsic spreading dynamics in a $d$-dimensional grid graph. A standard coupling argument shows that $\forall t \geq 0$, the total number of points in $\tilde{S}(t)$ (denoted by $\tilde{N}_t$) stochastically dominates that in $S(t)$ -- this is essentially due to (a) cluster 'seeding' at the highest possible rate $L_{\max}(n)$, and (b) the absence of multiple infections incident at any single node (Fig. \ref{fig:griddom}). Let $\tilde{T} \bydef \inf \{t \geq 0: \tilde{N}_t= n \}$ be the time when the number of points in $\tilde{S}(\cdot)$ first hits $n$. Then we have:
\begin{align}
  \mathcal{N}(S(t)) \leq_{st} \tilde{N}_t \;
  \Rightarrow \; \tilde{T} &\leq_{st} T. \label{eqn:stdomgrid}
\end{align}  
\noindent Let us denote by $\tilde{X}_i(s)$ the size of the $i$th created cluster of $\tilde{S}(\cdot)$ at time $s \geq T_i$. Then, for $t \geq 0$, we have
\begin{align*}
\left(\bigcap_{i=0}^{2et} \{\tilde{X}_i(t+T_i) < t^d l^d\}\right)
  &\bigcap\Bigg(\{\tilde{C}_t < 2eL_{\max}(n)t \}\Bigg)\\
  & \subseteq \{\tilde{N}_t < 2eL_{\max}(n)l^d t^{d+1}\},
\end{align*}
%where we drop the dependence of $L_{\max}(n)$ on $n$ for ease of notation. 
Now each random variable $\tilde{X}_i(t+T_i)$ is distributed as the number of infected nodes in a static infection process on an infinite grid at time $t$. Thus, using Lemma \ref{lem:gridpercolation} and a standard Chernoff bound for $\tilde{C}_t \sim$ Poisson($tL_{\max}(n)$), we can write: 
\begin{align*}
  &\mathbb{P}\left[\tilde{N}_t \geq (2eL_{\max}(n)l^d) t^{d+1}\right] \\
  &\leq\mathbb{P}\left[\tilde{C}_t \geq 2eL_{\max}(n)t\right] +
  \sum_{i=1}^{2eL_{\max}(n)t}\mathbb{P}\left[\tilde{X}_i(t+T_i) \geq t^d l^d\right] \\
  &\leq (2e)^{-L_{\max}(n)t} + 2eL_{\max}(n)t \cdot c_3 t^{2d} e^{-c_4\sqrt{t}}\\
  &= O(L_{\max}(n)e^{-c_4\sqrt{t}}). 
\end{align*}
With the stochastic dominance (\ref{eqn:stdomgrid}), this forces:
\begin{align}
\mathbb{P}\Bigg[T\leq&\left(\frac{n}{2eL_{\max}(n)l^d}\right)^{1/(d+1)}\Bigg] \nonumber \\
&\leq\mathbb{P}\left[\tilde{T}\leq\left(\frac{n}{2eL_{\max}(n)l^d}\right)^{1/(d+1)}\right] \nonumber \\
&=\mathbb{P}\left[\tilde{N}_{\left(\frac{n}{2eL_{\max}(n)l^d}\right)^{1/(d+1)}} \geq
    n \right] \nonumber \\
    &= O\left(e^{-c_2 \left(\frac{n}{L_{\max}(n)}\right)^{1/(2d+2)}}\right), \label{eqn:probdecay}
\end{align}
for the appropriate $c_2$, establishing the first part of the theorem. To see how this implies the second part, note that the estimate (\ref{eqn:probdecay}), together with the fact
that $L_{\max}(n) = O(n^{1-\epsilon})$ and the Borel-Cantelli lemma, gives us
\begin{align*}
&\mathbb{P}\left[\tilde{T} \leq \left(\frac{n}{2eL_{\max}(n)l^d}\right)^{1/(d+1)} \mbox{ for finitely many $n$}\right] = 1, \\
&\Rightarrow \liminf_{n \to \infty} \frac{\tilde{T}}{\left(n/L_{\max}(n)\right)^{1/(d+1)}}\stackrel{a.s.}{\geq} c_4 \bydef \frac{1}{(2el^d)^{1/(d+1)}} > 0
\end{align*}
By Fatou's lemma, we have:
\begin{align*}
\liminf_{n \to \infty}& \mathbb{E}\left[\frac{\tilde{T}}{\left(n/L_{\max}(n)\right)^{1/(d+1)}}\right]\\
&\geq \mathbb{E}\left[\liminf_{n \to \infty} \frac{\tilde{T}}{\left(n/L_{\max}(n)\right)^{1/(d+1)}}\right]\geq c_4 > 0.
\end{align*}
Thus proving $\mathbb{E}[T] \geq \mathbb{E}[{\tilde{T}}] =
\Omega\left(\left(n/L_{\max}(n)\right)^{1/(d+1)}\right)$.
\begin{figure}
\centering \includegraphics[width=0.75\columnwidth]{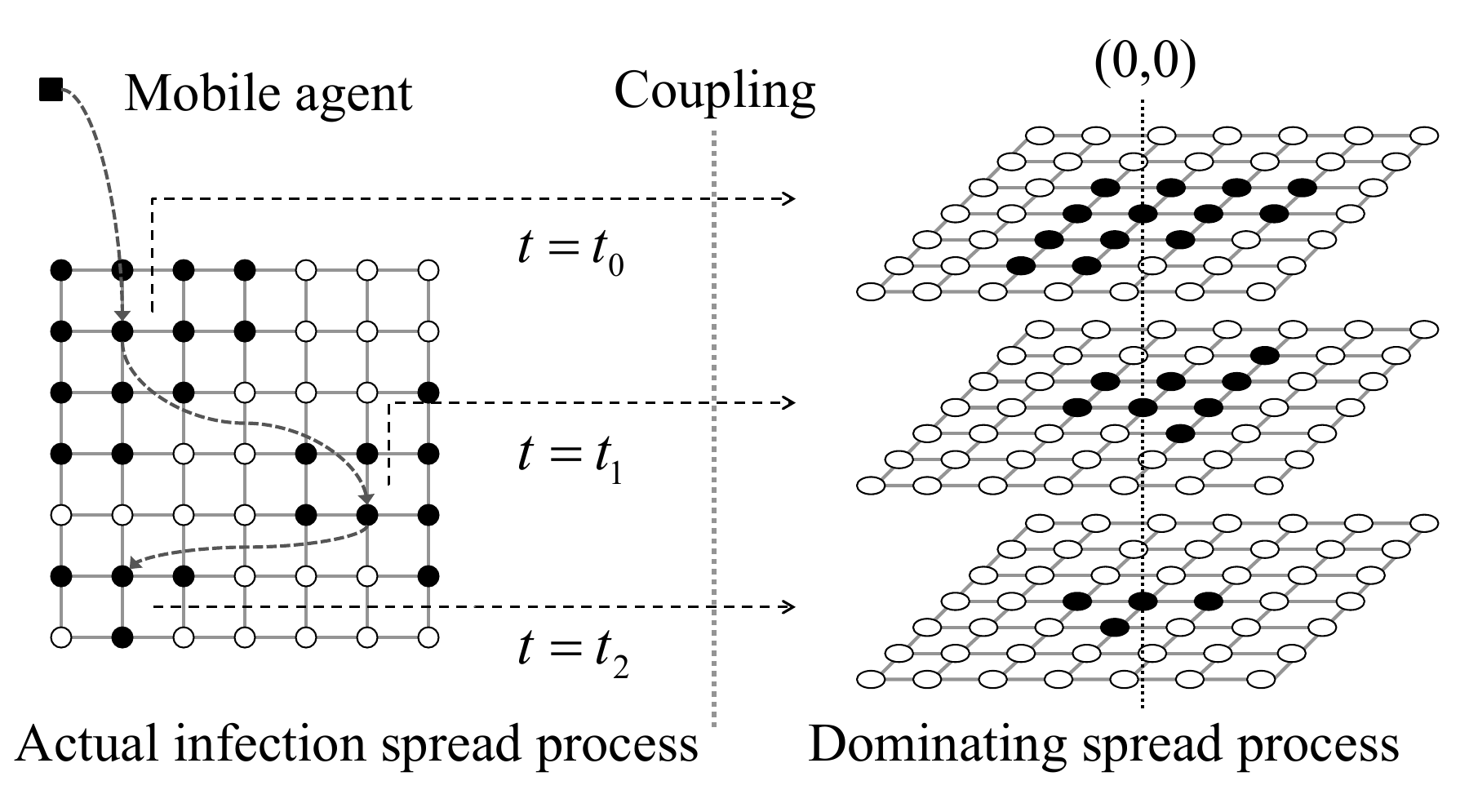}
  \caption{The grid graph: Coupling infection spreading with mobility to a dominating `cluster-growth' process}
  \label{fig:griddom}
\end{figure}
\end{proof}

\subsection{Geometric Random Graphs}
We finally prove the upper and lower bounds for the \emph{Geometric Random Graph (RGG)}. Recall that an RGG $G_n(r_n)$ is a family of random graphs wherein $n$ nodes are picked i.i.d. uniformly in $[0,1] \times [0,1]$. Two nodes $x, y$ have an edge iff $||x-y|| \leq r_n$, where $r_n$ is called the \emph{coverage radius}. 
%The RGG $G_n = G_n(r_n)$ consists of the $n$ nodes and edges as above. 

It is known that when the coverage radius $r_n$ is above a critical threshold of $\sqrt{\log n/\pi}$, the RGG is connected with high probability \cite{gupta:connectivity}. We now obtain two results that show that random spreading on RGGs in this critical connectivity regime is optimal upto logarithmic factors. First, we show with high probability that random spreading finishes in time $O(\sqrt[3]{n} \log n)$, and follow it up by showing that with high probability, no other policy can better this order (up to a logarithmic factor). This parallels the earlier results regarding the random external spreading policy on grids.

\begin{proof}[Proof of Theorem \ref{thm:rggach}]
Divide the unit square $[0,1] \times [0,1]$ into square \emph{tiles} of side length $r_n/\sqrt{5}$ each; there are thus
$5/r_n^2$ such tiles, say $k_1, \ldots, k_{5/r_n^2}$. If $n$ points are thrown uniformly randomly into $[0,1] \times [0,1]$, then, with $\mathcal{E}$ denoting the event that some tile is empty:
\begin{align}
\mathbb{P}\left[\mathcal{E}\right]&\leq\frac{5}{r_n^2} \mathbb{P}\left[ \mbox{tile 1 empty}\right]=\frac{5}{r_n^2} \left(1-\frac{r_n^2}{5}\right)^n\nonumber\\
% & \leq \frac{5}{r_n^2} \exp\left(-\frac{nr_n^2}{5} \right)\nonumber\\
&\leq \frac{n}{\log n} \exp(-\log n)= \frac{1}{\log n} \stackrel{n \to \infty}{\longrightarrow} 0. \label{eqn:whp} 
\end{align}
By construction, note that the maximum distance between points in two (horizontally or vertically) adjacent tiles is exactly $r_n$. Hence, two nodes in adjacent tiles are always connected by an edge. Also, a node in a tile is not connected to any node in a tile at least three hops away.

Let $\tilde{n}\triangleq nL_{\min}^2$. If we now divide $[0,1] \times [0,1]$ into (bigger) square \emph{chunks} of side length $1/\sqrt[6]{\tilde{n}}$ each, there are $\sqrt[3]{\tilde{n}}$ such square chunks, each containing a $\frac{\sqrt{5}}{r_n \tilde{n}} \times \frac{\sqrt{5}}{r_n \tilde{n}}$ grid of square tiles. In the case where no tile is empty, it follows from the arguments in the preceding paragraph that the diameter $D$ of the subgraph induced within each chunk is:
\begin{align*}
D\leq \frac{2\sqrt{5}}{r_n\sqrt[6]{\tilde{n}}} \leq %O\left(\frac{1}{r_n\sqrt[6]{nL_{\min}^2}}\right) \nonumber \\&= 
\frac{2}{\sqrt{(1+\gamma)\log n}}\sqrt[3]{\frac{n}{L_{\min}}}.
\end{align*}
Choosing $n\geq e^{4/1+\gamma}$, an application of Theorem \ref{thm:genach} in this case shows that for random external-spreading: 
$$\mathbb{E}\left[T \vert \mathcal{E}\right]\leq \sqrt[3]{n/L_{\min}}\cdot(\log n +1).$$
Note also that for any graph, using the random external-spreading strategy, we have $\mathbb{E}[T]\leq\frac{n}{L_{\min}}$. Thus if $\gamma\geq 2/3$, then combining with (\ref{eqn:whp}), we have:
$$\mathbb{E}[T]\leq 2\sqrt[3]{n/L_{\min}}\log n .$$

Further, let $\delta\triangleq\frac{2}{3}\log_n\left[\frac{n}{L_{\min}}\right]$ -- then given $\mathcal{E}$, we have that each subset in the partition has size $\geq n^{\delta}$. Now, for any $\gamma > 0$, and choosing $\kappa>\left(1+\frac{3\gamma}{1+\delta}\right)$, we have from the concentration in Theorem $\ref{thm:genach}$ that:
\begin{equation*}
\mathbb{P}\left[T\geq \kappa \sqrt[3]{n/L_{\min}} \log n|\mathcal{E}\right]\leq  \frac{1}{n^{\gamma}}.
\end{equation*}
Combining this with equation (\ref{eqn:whp}), we get: 
\begin{equation*}
\mathbb{P}[T\geq \kappa \sqrt[3]{n/L_{\min}} \log n] \leq \frac{2}{n^{\gamma}}.
\end{equation*}
This completes the proof.
\end{proof}

Consider an infinite planar grid with additional one-hop diagonal
edges, i.e. $G = (V, E)$ where $V = \mathbb{Z}^2$, $E = \{(x,y) \in\mathbb{Z}^2: ||x-y||_\infty \leq 1 \}$. Let an infection process $(S(t))_{t \geq 0}$ start from $0 \in \mathbb{Z}^2$ at time $0$ according to the standard static spread dynamics, i.e. with each edge propagating infection at an exponential rate $\beta$, and let $I(t)$ denote the set of infected nodes at time $t$. The following key lemma helps control the size of $I(t)$, \emph{i.e.} the extent of infection at time $t$:
\begin{lemma}
\label{lem:shape}
There exists $c_1 > 0$ such that for any $c_2 > 0$ and $t$ large enough:
\begin{align*}
 \mathbb{P}\left[\exists x \in I(t): ||x||_\infty \geq (c_1 \beta + c_2)t \right] =O\left((c_1 \beta + c_2)t \cdot e^{-c_2 t}\right). 
\end{align*}
\end{lemma}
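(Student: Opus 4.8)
The plan is to treat the static infection on the king graph $G=(\mathbb{Z}^2,E)$, $E=\{(x,y):||x-y||_\infty=1\}$, as first-passage percolation: assign to each edge $e$ an independent $\mathrm{Exp}(\mu)$ passage time $\xi_e$, so that node $x$ is infected exactly at $\tau(x)=\min_{\gamma:0\to x}\sum_{e\in\gamma}\xi_e$ and $I(t)=\{x:\tau(x)\le t\}$. Write $M=(c_1\mu+c_2)t$. Two elementary deterministic facts drive everything: (i) an edge of $G$ changes each coordinate by at most $1$, so any walk from $0$ to a node $x$ uses at least $||x||_\infty$ edges; and (ii) every vertex has $8$ neighbours, so there are at most $8^k$ self-avoiding walks of length $k$ started at $0$.

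\textbf{Reduction to one path length.} First I would reduce the event to a union over walks of a single length. If $\{\exists x\in I(t):||x||_\infty\ge M\}$ occurs, pick such an $x$; since $\tau(x)\le t$ and edge weights are positive there is a self-avoiding path $\gamma$ from $0$ to $x$ with $\sum_{e\in\gamma}\xi_e\le t$, and by (i) we have $|\gamma|\ge||x||_\infty\ge M$, hence $|\gamma|\ge\lceil M\rceil$. The first $\lceil M\rceil$ edges of $\gamma$ form a self-avoiding walk $\gamma'$ from $0$ with $\sum_{e\in\gamma'}\xi_e\le\sum_{e\in\gamma}\xi_e\le t$. Therefore $\{\exists x\in I(t):||x||_\infty\ge M\}\subseteq\bigcup_{\gamma':|\gamma'|=\lceil M\rceil}\{\sum_{e\in\gamma'}\xi_e\le t\}$, and by (ii) and a union bound, with $\Gamma_k$ denoting a sum of $k$ i.i.d. $\mathrm{Exp}(\mu)$ variables, $\mathbb{P}[\exists x\in I(t):||x||_\infty\ge M]\le 8^{\lceil M\rceil}\,\mathbb{P}[\Gamma_{\lceil M\rceil}\le t]$.

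\textbf{Chernoff bound on the Gamma lower tail.} The remaining step is the exponential Markov bound $\mathbb{P}[\Gamma_k\le t]\le e^{st}\bigl(\mu/(\mu+s)\bigr)^k$ for any $s>0$. Choosing $s=\mu(8e-1)$ (so $1+s/\mu=8e$ and $8^k(\mu/(\mu+s))^k=e^{-k}$) turns the bound above into $\exp\bigl(\mu(8e-1)t-\lceil M\rceil\bigr)\le\exp\bigl(\mu(8e-1)t-(c_1\mu+c_2)t\bigr)$, and taking the universal constant $c_1\ge 8e-1$ makes the $\mu$-coefficient nonpositive, giving $\le e^{-c_2t}$ for every $c_2>0$, $\mu>0$ and every $t$. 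This already yields $\mathbb{P}[\exists x\in I(t):||x||_\infty\ge M]=O(e^{-c_2t})$, which is (slightly) stronger than the claimed $O\bigl((c_1\mu+c_2)t\cdot e^{-c_2t}\bigr)$; the polynomial prefactor in the statement is harmless slack, and in fact appears automatically if one prefers to avoid the prefix trick and instead sums the tail bound over all walk lengths $k\ge\lceil M\rceil$ (or over the $O(m)$ boundary vertices of each $\ell^\infty$-shell of radius $m$).

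\textbf{Where the (mild) difficulty lies.} I do not expect a real obstacle: the only things to get right are (a) performing the union bound over the correct finite family — a single length $\lceil M\rceil$, reached by truncating a minimizing path — rather than over the infinitely many far vertices, using self-avoidance so the $\lceil M\rceil$ edge weights are genuinely i.i.d.; and (b) picking the Chernoff parameter $s$ (equivalently $c_1$) large enough for the $8^k$ walk-entropy to be dominated by the Gamma lower-tail exponent. It is worth noting that one cannot simply quote Kesten's shape theorem (as used in Lemma \ref{lem:gridpercolation}), which only gives an $e^{-\Theta(\sqrt t)}$ rate: the linear-in-$t$ rate $e^{-c_2t}$ here is precisely the crude large-deviation regime in which overshooting the typical linear growth by an $\Omega(t)$ amount requires a sum of $\Omega(t)$ exponential clocks to be atypically small, which is exponentially unlikely in $t$.
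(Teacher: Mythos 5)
Your proof is correct, and it follows the same high-level strategy as the paper -- a union bound over roughly $8^{\Theta(t)}$ lattice paths combined with an exponential Chernoff bound, with the same trick of choosing the tilt so that $8$ times the per-step MGF equals $e^{-1}$ -- but the execution differs in a way worth noting. The paper sums over endpoint vertices on the shell $\{v:||v||_\infty=\lfloor ct\rfloor\}$ and over vertex skeletons $v_0=0,v_1,\dots,v_{\lfloor ct\rfloor}$ with $||v_{i+1}-v_i||_\infty=1$, works with the point-to-point passage times $T(v_i,v_{i+1})$, stochastically dominates each from below by an $\mathrm{Exp}(8\mu)$ variable (the minimum of the $8$ clocks at a vertex), and -- crucially -- replaces the dependent passage times along a skeleton by independent copies $T'$, a substitution the paper writes as an equality of probabilities and which is the delicate step of that argument. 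Your formulation at the level of edge weights along a single truncated self-avoiding walk of length $\lceil(c_1\mu+c_2)t\rceil$ makes that issue disappear: the $\mathrm{Exp}(\mu)$ weights along a fixed self-avoiding walk are genuinely i.i.d., so the Gamma lower-tail bound applies directly, and the $8^k$ walk count plays exactly the role of the paper's sum over skeletons. The payoffs are a cleaner bound $e^{-c_2t}$ with no polynomial prefactor (the paper's $(4ct)$ factor comes from its sum over shell endpoints), validity for all $t$ rather than $t$ large, and a smaller constant ($c_1=8e-1$ versus the paper's $c_1=8(8e-1)$), none of which affects how the lemma is used downstream; your observation that Kesten's shape theorem cannot be quoted here because it only gives an $e^{-\Theta(\sqrt{t})}$ rate is also exactly the reason the paper proves this lemma separately rather than reusing Lemma \ref{lem:gridpercolation}.
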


\begin{proof}
\begin{align*}
\mathbb{P}&[\exists x \in I(t): ||x||_\infty\geq ct] \\ 
&\leq \mathbb{P}[\exists v \in \mathbb{Z}^2:||v||_\infty = \lfloor ct \rfloor, T(v) \leq t ] \\
&\leq \sum_{v \in \mathbb{Z}^2: ||v||_\infty = \lfloor ct \rfloor} \mathbb{P}[\exists \mbox{ a path } r:0 \rightarrow v, T(r) \leq t].
\end{align*}
Observe that for any $v$ with $||v||_\infty = \lfloor ct \rfloor$ and any path of edges $r$ from 0 to $v$, there must exist $\lfloor ct \rfloor + 1$ nodes $v_0 = 0, v_1, \ldots, v_{\lfloor ct \rfloor}$ on the path $r$ such that $||v_i||_\infty \leq \lfloor ct \rfloor$ and $||v_{i+1} - v_i||_\infty = 1$. Indeed, each edge on a path can increase the $||\cdot||_\infty$ distance from $0$ by at most $1$. Therefore, continuing the above chain of inequalities, we have:
\begin{align}
&\sum_{\{v: ||v||_\infty = \lfloor ct \rfloor\}} \sum_{v_0,\ldots,v_{\lfloor ct \rfloor}} \mathbb{P}\left[\exists\mbox{ a path } r:0 \rightarrow v \mbox{ passing } \right. \nonumber\\
&\left. \mbox{ successively through the } v_i, T(r) \leq t \right] \nonumber \\    
&\leq \sum_{\{v: ||v||_\infty = \lfloor ct \rfloor\}}
\sum_{v_0,\ldots,v_{\lfloor ct \rfloor}} \mathbb{P}\Bigg[\exists\mbox{ a path } r \mbox{ passing } \nonumber \\
&\mbox{successively through the }v_i, \sum_{i=0}^{\lfloor ct \rfloor -1} T(v_i,v_{i+1}) \leq t \Bigg] \nonumber\\
&\leq \sum_{\{v: ||v||_\infty = \lfloor ct \rfloor\}}
\sum_{v_0,\ldots,v_{\lfloor ct \rfloor}} \mathbb{P}\left[\sum_{i=0}^{\lfloor ct \rfloor -1} T(v_i,v_{i+1}) \leq t \right], \label{eqn:probbound}
\end{align} 
where the second sum runs throughout over all $v_i$ with $v_0 = 0$, $||v_i||_\infty \leq \lfloor ct \rfloor$ and $||v_{i+1}-
v_i||_\infty = 1$, and $T(x,y)$ represents the infection passage time from node $x$ to node $y$. Letting $T'(v_i,v_{i+1})$ be random variables identically distributed as $T(v_i,v_{i+1})$ but \emph{independent} for $i = 1, \ldots, \lfloor ct \rfloor - 1$, we can write, for $\psi > 0$,
\begin{align*}
\sum_{v_0,\ldots,v_{\lfloor ct \rfloor}} &\mathbb{P}\left[\sum_{i=0}^{\lfloor ct \rfloor -1} T(v_i,v_{i+1})\leq t \right]\\  
&=\sum_{v_0,\ldots,v_{\lfloor ct \rfloor}} \mathbb{P}\left[\sum_{i=0}^{\lfloor ct \rfloor -1} T'(v_i,v_{i+1})\leq t \right] \\
%&\leq \sum_{v_0,\ldots,v_{\lfloor ct\rfloor}} e^{\psi t} \prod_{i=0}^{\lfloor ct \rfloor -1}\mathbb{E}\left[e^{-\psi T'(v_i,v_{i+1})}\right] \\
&\leq e^{\psi t}\sum_{v_0,\ldots,v_{\lfloor ct \rfloor}} \prod_{i=0}^{\lfloor ct \rfloor -1} \mathbb{E}\left[e^{-\psi T'(v_i,v_{i+1})}\right] \\
&= e^{\psi t} \left(\sum_{\{u: ||u||_\infty = 1\}}
\mathbb{E}\left[e^{-\psi T'(0,u)}\right]\right)^{\lfloor ct \rfloor}.
\end{align*}
In the last step of the above display, we have successively summed over $v_{\lfloor ct \rfloor}, v_{\lfloor ct \rfloor -1}, \ldots,v_0$, and have used the fact that infection spread times are translation-invariant, \emph{i.e.} for any $x,y,a\in\mathbb{Z}^2$,
\begin{align*}
T'(x,y) &\stackrel{d}{=} T(x,y)\stackrel{d}{=} T(x+a,y+a)\stackrel{d}{=} T'(x+a,y+a). 
\end{align*} 

For any $u \in \mathbb{Z}^2$ such that $u$ is a neighbor of $0$ (\emph{i.e.} $||u||_\infty = 1$), we must have $T(0,u) \geq \min_{w:||w||_\infty = 1} t((0,w))$, where $t(e) \sim$ Exp($\mu$) is the travel time of the infection across edge $e \in E$. Since the number of neighbors of $0$ in $G$ is exactly $8$ ($4$ up-down/left-right and $4$ diagonal), $T(0,u)$ stochastically dominates an exponential random variable with parameter $8\mu$. Thus, defining $ \hat{T} \sim \mbox{Exp}(8\mu)$, we have:
\begin{align}
\mathbb{E}&\left[e^{-\psi T'(u,v)}\right] \leq \mathbb{E}\left[e^{-\psi \hat{T}}\right] = \left( 1 + \frac{\psi}{8\mu} \right)^{-1},\\
\Rightarrow \; &e^{\psi t} \left(\sum_{\{u: ||u||_\infty = 1\}}
    \mathbb{E}\left[e^{-\psi T'(0,u)}\right]\right)^{\lfloor ct
      \rfloor} \nonumber \\   
    &\leq e^{\psi t} \left( 8 \left( 1 +
    \frac{\psi}{8\mu} \right)^{-1}\right)^{\lfloor ct \rfloor}. \label{eqn:mgfbound}
\end{align}
Setting $\psi = 8\mu (8e-1)$ so that $8(1 + \psi/\mu)^{-1} =
  e^{-1}$, equation (\ref{eqn:mgfbound}) becomes:
\begin{align*}
    e^{\psi t} \Bigg(\sum_{\{u: ||u||_\infty = 1\}}
	\mathbb{E}&\left[e^{-\psi T'(0,u)}\right]\Bigg)^{\lfloor ct
	  \rfloor}\\
	& \leq e^{8\mu (8e-1) t} \cdot e^{-ct + 1}. 
\end{align*} 
Finally, letting $c_1 = 8(8e-1)$ and $c = c_1\mu + c_2$, we obtain the desired result from (\ref{eqn:probbound}) and the above:
  \begin{align*}
    \sum_{\{v: ||v||_\infty = \lfloor ct \rfloor\}}
    &\sum_{v_0,\ldots,v_{\lfloor ct \rfloor}}
    \mathbb{P}\left[\sum_{i=0}^{\lfloor ct \rfloor -1} T(v_i,v_{i+1})
      \leq t \right]\\ 
    &\leq  |\{v: ||v||_\infty = \lfloor ct \rfloor\}| \cdot e^{-c_2 t + 1}\\
    &\leq (4ct)\cdot e^{-c_2 t + 1} \\
    &= O\left((c_1 \mu + c_2)t \cdot e^{-c_2 t}\right).
  \end{align*}  
\end{proof}

Using Lemma \ref{lem:shape}, we can derive a converse result for the geometric random graph, which we present in Theorem \ref{thm:rggconv}. As the proof techniques are similar to those presented before, we present only a sketch of the proof for this result.

\begin{proof}[Proof sketch of Theorem \ref{thm:rggconv}]
The method of approach is along the lines of that used to prove Theorem \ref{thm:gridconv} along with certain geometric considerations for the case of the random geometric graph. We introduce a spreading process that spreads `faster' than $\pi$, and show using Lemma \ref{lem:shape} that even this process must take at least the claimed amount of time to spread. For ease of exposition, we break the proof into two steps:

\emph{Step 1:} Divide the unit square $[0,1] \times [0,1]$ row and column-wise into $r_n \times r_n$ \emph{tiles}; there are thus $1/r_n^2$ tiles, say $k_1,\ldots,k_{1/r_n^2}$. By standard balls-and-bins arguments, with the $n$ nodes thrown randomly into $n/\log n$ tiles, each tile receives a maximum of $O(\log n)$ nodes with probability $1-o(1)$.

\emph{Step 2:} Within the event in step 1, we introduce the following associated spreading process which, via coupling arguments, can be shown to dominate the spread due to $\pi$ at each time $t$: first, take each tile to be the vertex of a square grid where adjacent diagonals are connected. Also, set the rate of infection spread on every edge to be Exp($\mu \log^2 n$). This effectively upper-bounds the best rate of spread among neighboring tiles. Create a dominating process by creating non-interfering clusters at a Poisson rate 1, with each cluster growing independently on an infinite square grid with diagonal edges and the above spread rate. Lemma \ref{lem:shape} shows that w.h.p., by time $t$, $O(t)$ clusters are formed, and each cluster has at most $O(t^2\log^4 n)$ nodes. Thus it takes at least $O\left(\frac{\sqrt[3]{n}}{\log^{4/3}n}\right)$ time for spreading to spreading w.h.p.
\end{proof}

\section{Conclusion}
We have modeled and analyzed the spread of epidemic processes on graphs when assisted by external agents. For general graphs, we have provided upper bounds on the spreading time due to external-infection with bounded virulence for random and greedy infection policies; these bounds are in terms of the diameter and the conductance of the graph. On the other hand, for certain spatially-constrained graphs such as grids and the geometric random graph, we have derived corresponding lower bounds: these indicate that random external-infection spreading is order-optimal up to logarithmic factors (and greedy is order-optimal) in such scenarios. Finally, we have discussed applications of our result to graphs with long-range edges and/or mobile agents.

\section*{Acknowledgements}
This work was partially supported by NSF Grants CNS-1320175, CNS-1017525, CNS-0721380 and Army Research Office Grant W911NF-11-1-0265. We also thank the anonymous reviewers for their suggestions, which helped greatly in improving the presentation of this paper.
%\bibliographystyle{ieeetr}
%\bibliography{biblio}

\end{document}